%%
%% This is file `sample-sigplan.tex',
%% generated with the docstrip utility.
%%
%% The original source files were:
%%
%% samples.dtx  (with options: `sigplan')
%% 
%% IMPORTANT NOTICE:
%% 
%% For the copyright see the source file.
%% 
%% Any modified versions of this file must be renamed
%% with new filenames distinct from sample-sigplan.tex.
%% 
%% For distribution of the original source see the terms
%% for copying and modification in the file samples.dtx.
%% 
%% This generated file may be distributed as long as the
%% original source files, as listed above, are part of the
%% same distribution. (The sources need not necessarily be
%% in the same archive or directory.)
%%
%% Commands for TeXCount
%TC:macro \cite [option:text,text]
%TC:macro \citep [option:text,text]
%TC:macro \citet [option:text,text]
%TC:envir table 0 1
%TC:envir table* 0 1
%TC:envir tabular [ignore] word
%TC:envir displaymath 0 word
%TC:envir math 0 word
%TC:envir comment 0 0
%%
%%
%% The first command in your LaTeX source must be the \documentclass command.
\documentclass[sigplan,screen,authorversion]{acmart}
\usepackage{subcaption}
\usepackage{todonotes}
\usepackage{siunitx}
\usepackage[nolist]{acronym}
%% NOTE that a single column version is required for 
%% submission and peer review. This can be done by changing
%% the \doucmentclass[...]{acmart} in this template to 
%% \documentclass[manuscript,screen,review]{acmart}
%% 
%% To ensure 100% compatibility, please check the white list of
%% approved LaTeX packages to be used with the Master Article Template at
%% https://www.acm.org/publications/taps/whitelist-of-latex-packages 
%% before creating your document. The white list page provides 
%% information on how to submit additional LaTeX packages for 
%% review and adoption.
%% Fonts used in the template cannot be substituted; margin 
%% adjustments are not allowed.
%%
%% \BibTeX command to typeset BibTeX logo in the docs
\AtBeginDocument{%
  \providecommand\BibTeX{{%
    \normalfont B\kern-0.5em{\scshape i\kern-0.25em b}\kern-0.8em\TeX}}}

%\renewcommand\footnotetextcopyrightpermission[1]{}
%\pagestyle{plain} 

%% Rights management information.  This information is sent to you
%% when you complete the rights form.  These commands have SAMPLE
%% values in them; it is your responsibility as an author to replace
%% the commands and values with those provided to you when you
%% complete the rights form.
\copyrightyear{2023}
\acmYear{2023}
\setcopyright{none}
\acmBooktitle{The 31st International Conference on Real-Time Networks and Systems (RTNS 2023), June 7--8, 2023, Dortmund, Germany}
%\acmPrice{15.00}
\acmDOI{10.1145/3575757.3593644}
\acmISBN{978-1-4503-9983-8/23/06}

% Cheats
%\addtolength{\textfloatsep}{-1mm}
%\addtolength{\floatsep}{-1mm}

%%
%% Submission ID.
%% Use this when submitting an article to a sponsored event. You'll
%% receive a unique submission ID from the organizers
%% of the event, and this ID should be used as the parameter to this command.
%%\acmSubmissionID{123-A56-BU3}

%%
%% For managing citations, it is recommended to use bibliography
%% files in BibTeX format.
%%
%% You can then either use BibTeX with the ACM-Reference-Format style,
%% or BibLaTeX with the acmnumeric or acmauthoryear sytles, that include
%% support for advanced citation of software artefact from the
%% biblatex-software package, also separately available on CTAN.
%%
%% Look at the sample-*-biblatex.tex files for templates showcasing
%% the biblatex styles.
%%

%%
%% The majority of ACM publications use numbered citations and
%% references.  The command \citestyle{authoryear} switches to the
%% "author year" style.
%%
%% If you are preparing content for an event
%% sponsored by ACM SIGGRAPH, you must use the "author year" style of
%% citations and references.
%% Uncommenting
%% the next command will enable that style.
%%\citestyle{acmauthoryear}

\newtheorem{theorem}{Theorem}[section]
%%
%% end of the preamble, start of the body of the document source.
\begin{document}

%%
%% The "title" command has an optional parameter,
%% allowing the author to define a "short title" to be used in page headers.
\title{On the Validity of Credit-Based Shaper Delay Guarantees in Decentralized Reservation Protocols}

%%
%% The "author" command and its associated commands are used to define
%% the authors and their affiliations.
%% Of note is the shared affiliation of the first two authors, and the
%% "authornote" and "authornotemark" commands
%% used to denote shared contribution to the research.

\author{Lisa Maile}
\affiliation{%
  \institution{Computer Science 7\\Friedrich-Alexander-Universität}
  \city{Erlangen-Nürnberg}
  \country{Germany}}
\email{lisa.maile@fau.de}

\author{Dominik Voitlein}
\affiliation{%
  \institution{Computer Science 7\\Friedrich-Alexander-Universität}
  \city{Erlangen-Nürnberg}
  \country{Germany}}

\author{Alexej Grigorjew}
\affiliation{%
  \institution{Chair of Communication Networks\\Julius-Maximilians-Universität}
  \city{Würzburg}
  \country{Germany}}

\author{Kai-Steffen J. Hielscher}
\affiliation{%
  \institution{Computer Science 7\\Friedrich-Alexander-Universität}
  \city{Erlangen-Nürnberg}
  \country{Germany}}

\author{Reinhard German}
\affiliation{%
  \institution{Computer Science 7\\Friedrich-Alexander-Universität}
  \city{Erlangen-Nürnberg}
  \country{Germany}}
%%
%% By default, the full list of authors will be used in the page
%% headers. Often, this list is too long, and will overlap
%% other information printed in the page headers. This command allows
%% the author to define a more concise list
%% of authors' names for this purpose.
%%\renewcommand{\shortauthors}{Maile et al.}

%%
%% The abstract is a short summary of the work to be presented in the
%% article.
\begin{abstract}
Resource reservation is a fundamental mechanism for ensuring quality of service in time-sensitive networks, which can be decentralized by using reservation protocols. 
In the Ethernet technology Time-Sensitive Networking, this has been proposed in conjunction with the Credit-Based Shaper. For the reservation, the standards assume a maximum worst-case latency bound at each hop.
However, we will show through formal analysis and simulation that these worst-case latency bounds are not safe.
To face this, we propose an extension to the current standards to allow the reservation of time-sensitive traffic with reliable latency guarantees. The effectiveness of our approach is demonstrated through simulations of both synthetic and industrial networks.
Finally, by providing additional information about neighboring devices, we could further increase the maximum reservable traffic by up to 20\% in our test cases.
\end{abstract}
%We provide a formal proof that our approach can safely configure TSN networks,
%%
%% The code below is generated by the tool at http://dl.acm.org/ccs.cfm.
%% Please copy and paste the code instead of the example below.
%%
\begin{CCSXML}
<ccs2012>
<concept>
<concept_id>10010520.10010570.10010574</concept_id>
<concept_desc>Computer systems organization~Real-time system architecture</concept_desc>
<concept_significance>300</concept_significance>
</concept>
<concept>
<concept_id>10010520.10010575.10010577</concept_id>
<concept_desc>Computer systems organization~Reliability</concept_desc>
<concept_significance>300</concept_significance>
</concept>
<concept>
<concept_id>10002950.10003714.10003732</concept_id>
<concept_desc>Mathematics of computing~Calculus</concept_desc>
<concept_significance>300</concept_significance>
</concept>
<concept>
<concept_id>10003033.10003079.10003080</concept_id>
<concept_desc>Networks~Network performance modeling</concept_desc>
<concept_significance>300</concept_significance>
</concept>
<concept>
<concept_id>10003033.10003068.10003073.10003074</concept_id>
<concept_desc>Networks~Network resources allocation</concept_desc>
<concept_significance>500</concept_significance>
</concept>
</ccs2012>
\end{CCSXML}

\ccsdesc[300]{Computer systems organization~Real-time system architecture}
\ccsdesc[300]{Computer systems organization~Reliability}
\ccsdesc[300]{Mathematics of computing~Calculus}
\ccsdesc[300]{Networks~Network performance modeling}
\ccsdesc[500]{Networks~Network resources allocation}

%%
%% Keywords. The author(s) should pick words that accurately describe
%% the work being presented. Separate the keywords with commas.
\keywords{Time-Sensitive Networking, Reservation Protocol, Resource Allocation, Decentralized Network, Latency, Network Calculus}

%%
%% This command processes the author and affiliation and title
%% information and builds the first part of the formatted document.
\maketitle 
\vspace{-4mm}
\section{Introduction}
With the increasing popularity of digital \ac{AV} applications, there has been a growing demand for more advanced transmission mediums to support them. Applications that combine \ac{AV} streams - such as lip synchronization, broadcasting, gaming, and virtual reality - require precise synchronization between the streams. Traditional networks, like standard Ethernet, cannot meet these strict \ac{QoS} requirements due to potentially high delays and jitters. As a response, the IEEE \ac{AVB} task group developed new standards for Ethernet. These standards have introduced, i.a., the \ac{CBS} - a new forwarding algorithm~\cite{Qav}, and provided decentralized reservation protocols, such as the \ac{SRP}~\cite{Qat}. In 2012, \ac{AVB} was renamed to \ac{TSN} to reflect that low delay and jitter are necessary for a wide range of applications.

Decentralized resource reservation in \ac{TSN} allows systems to self-organize while keeping strict \ac{QoS} requirements, such as maximum end-to-end delays and no packet loss. A reservation protocol calculates the maximum per-hop latencies on a flow's path and reserves resources for the flow if its end-to-end delay requirement is met. Although originally designed for \ac{CBS} networks, new \ac{TSN} draft standards also promise the integration of other network schedulers~\cite{Qdd}. As a result of the ongoing standardization process, decentralized admission control is a highly active area of research.

\ac{TSN} standards propose three different methods for computing the maximum local latency at each bridge for decentralized admission control in \ac{CBS} networks in~\cite[Section~6.6]{BA},~\cite[Section~L.3.1.1]{802Q}, and~\cite{plenary}, where~\cite{plenary} is referenced in the 802.1Q standard, see~\cite[p.~1569]{802Q}.
In this paper, we will demonstrate that all the equations provided by the standards are unsafe, meaning that they do not provide a valid upper bound on the per-hop latency. We will further prove that the current decentralized admission process cannot derive safe upper bounds. To face this, we propose \ac{SSRP} as an extension to the existing standards. \ac{SSRP} allows for reliable worst-case delays using the mathematical framework Network Calculus. We prove our results through formal analysis and use simulation to provide counterexamples to the standards' equations and to evaluate our approach. To the best of our knowledge, we are the first to show that the latency calculations in the standards are unsafe. To sum up, our main contributions are:
\begin{itemize}
    \item Simulation of the delay bounds from the standards to show that they do not cover the worst case
    \item Analytical proof that the queuing delay in \ac{CBS} networks with the given protocol cannot be upper bounded
    \item Introduction of \ac{SSRP} to finally allow for decentralized delay guaranteeing \ac{CBS} networks
\end{itemize}

The remainder of this paper is organized as follows. Section~\ref{sec:related} presents related work. We provide an overview of decentralized admission control in \ac{TSN}, including the existing reservation protocols and the \ac{CBS} forwarding mechanism, in Section~\ref{sec:fundamentals}. In Section~\ref{sec:state-of-the-art}, we introduce the three latency calculations proposed by the \ac{TSN} standards. Section~\ref{sec:infinite} will prove that the queuing delay in \ac{CBS} networks can become unbounded. In Section~\ref{sec:ourapproach}, we will propose and prove \ac{SSRP} which provides safe guarantees. Finally, Section~\ref{sec:evaluation} presents counterexamples for the standards' equations and evaluates our new solution using simulation, before Section~\ref{sec:conclusion} concludes the paper.

\section{Related Work}
\label{sec:related}
Grigorjew et al.~\cite{grigorjew_decentralSP_2020} proposed a decentralized admission control scheme, which is the first to allow decentralized reservation for Strict Priority networks. Our work extends the approach of~\cite{grigorjew_decentralSP_2020} to \ac{CBS} networks, which can benefit from the shaping effect of \ac{CBS} queues. In~\cite{grigorjew_decentralSP_2020}, the authors determine the maximum number of packets that can arrive at each queue and utilize these packets to account for the per-hop delay. In addition to this, we also consider the impact of both link shaping and \ac{CBS} shaping on packet arrival, allowing us to model that packets must arrive in a serialized manner rather than all at once.

Boiger~\cite{boiger} presented a counterexample to a statement on the maximum delay in \ac{CBS} networks at the IEEE 802 Plenary Meeting. The standard states that "\SI{2}{\milli\second} [...] for SR Class A can be met for 7 hops of \SI[per-mode=symbol]{100}{\mega\bit\per\second} Ethernet if the maximum frame size on the LAN is 1522 octets"~\cite{BA}. Boiger showed an example where this \SI{2}{\milli\second} end-to-end delay is violated. However, the standard delay formulas have not been discussed. We will compare these formulas and identify the root cause of the decentralized delay violations. To sum up, no valid delay formula for decentralized \ac{CBS} network exists until now.

\begin{table}[t]
\centering
\caption{Notation}\vspace{-2mm}
\resizebox{\linewidth}{!}{%
\begin{tabular}{|l|l|}
\hline
Variable & Definition \\ \hline\hline
\textbf{Topology / Routes} & \\	
		$C$ & link capacity\\
		$Q$ & number of CBS priorities\\
  $\mathcal{L}^-$ & set of input links \\
		$Q^-_l$ & set of \ac{CBS} queues at input link $l$\\
  $\Phi_f^q$ & queues on path of flow $f$ before queue $q$\\
  \textbf{Flows} & \\	
		$F_l$, $F_{l,q}$ & set of flows arriving from link $l$ (and queue $q$) \\
  $\mathit{CMI}$ & sending interval of a flow \\
		$\mathit{MFS}$ & max. packet size of a flow \\
		$\mathit{MIF}$ & max. packets per interval of a flow \\	
		
		\textbf{Frame Sizes} & \\	
        $L_{f}, l_{f}$ & max./min. packet size of one flow \\
		$L$, $l$ & max./min. packet size at a queue\\
		$L_{max}$, $L_{min}$ & max./min. packet size in the network\\
  
  \textbf{Delays} & \\	
  		$\overline{D}$, $\underline{D}$ & max./min. hop delay (pre-configured) \\
          	$D$ & worst-case hop delay (current state) \\
            $\overline{\mathcal{D}}_f$, $\underline{\mathcal{D}}_f$ & accumulated max./min. latency of flow $f$\\
\textbf{CBS} & \\ 	
		$idSl$, $sdSl$  & idleSlope / SendSlope\\
  		$c$ & CBS credit  \\
		$c_{max}$, $c_{min}$ & max./min. CBS credit  \\
		
		\textbf{Network Calculus} & \\	
        $R(t)$, $R^*(t)$ &  cumulative incoming/outgoint traffic\\
		$\alpha$, $\alpha_f$ & queue arrival / flow arrival curve \\
		$b^f$, $\hat{b}$ & packets in a burst from flow $f$ / cross-traffic\\
		$\beta_{R,T}$ & service curve with latency $T$ and rate $R$  \\
		$\sigma_{l}$, $\sigma_{CBS}$ & shaping curve of link / CBS queue \\

        \textbf{Standards} & \\	
        $t_{L} = \frac{L}{C}$ & Transmission time of packet with size $L$ \\
        $t_{proc}$ & Processing delay \\
        $t_{prop}$ & Propagation delay \\
        $t_{sf}$ & Store-and-forward delay \\
%        $t_{int}$ & Interference \\
        $t_{inQueue}$ & Input queuing delay  (typically not present) \\
        $t_{oct} = 8\mathrm{bit} / C$ & Octet transmit time \\
  \hline
\end{tabular}%
}
\label{tab:notation}
\end{table}

We use \ac{NC} - as it is a well-established delay analysis framework - to offer local guarantees which remain valid even when new flows are added to the network. Details of the delay analysis of \ac{CBS} using \ac{NC} can be found in~\cite{queck, framework, zhao_improving_avb_2018, zhao_latency_2020}. A complete overview of \ac{NC} results for \ac{TSN} has been presented in~\cite{maile_network_2020}. All of these works analyze the flow delays for static flow reservations, while we derive local upper bounds which remain valid even after new flow registrations.  In~\cite{maile_journal_2022}, Maile et al. use \ac{NC} for \ac{CBS} networks for the configuration of delay guarantees with a central network controller while the network is running. Contrary, our paper presents a solution for decentralized network setups.

\section{Fundamentals}
\label{sec:fundamentals}
\ac{TSN} has been extended and enhanced significantly over the last years.
We will explain the current state-of-the-art concepts in the following section.

\subsection{Notation}
We model a network as a directed multigraph, $G = (V,E)$, where each vertex $V$ (hereafter called \textit{node}) represents an \ac{ES} or a bridge. Each edge, $E$, represents a single \ac{CBS} output queue. We identify each output queue with priority $p$ uniquely with the tuple $(u,v,p)\in E$ with $0\le p < Q$. This means that there are $Q$ edges between nodes $u$ and $v$. Physical links are denoted by $(u,v)$. Table~\ref{tab:notation} provides a list of all variables for our models. Each variable in Table~\ref{tab:notation} can have a superscript, to identify the corresponding priority $(p)$, link $(u,v)$, or queue $(u,v,p)$, e.g., $\bullet^{(7)}$ for priority 7. We use the term stream and flow interchangeably. 
We use $D$ for the worst-case per-hop delay and $\mathcal{D}_f$ for the cumulative path delay of a flow until a specific hop (sum of all previous delays); 
with over- and underlines for upper and lower bounds.
To reduce the notation complexity, we assume the same link capacity $C$ at each output port. 

\subsection{Decentralized Admission Control}

Decentralized admission control does not rely on
a central configuration unit during operation, but on autonomous decisions
of every involved networking device. As a result, they require concise communication to provide deterministic guarantees.

\textbf{Configuration:}
Distributed admission control relies on budgets for available resources on each node, i.e., the available bandwidth per traffic class or latency budgets.
Admission control ensures that a given threshold is not exceeded
by new reservations for each such resource type.
Typically, these thresholds can either be pre-configured as default values,
configured manually by a network operator, or they can be defined by a
\ac{NMS}.
Note that, unlike a central control unit, the \ac{NMS} is not involved in the reservation process. After the initial configuration, the networking units are operating autonomously.

\textbf{Communication:}
Generally, subscription-based information exchange protocols work in four steps.
(1) The source of the information (talker) advertises the availability of data,
which can be identified by tags or by individual IDs.
(2) The network elements distribute these advertisements throughout the network.
(3) End devices (listeners) receive the advertisements. If they are interested in the particular data stream,
they send a subscription to the talker.
(4) The talker starts to transmit the data, and the network elements forward it to all interested listeners.

For admission control, these advertisements must contain information that allows to calculate the required
resources for each flow.
Typically, this includes some kind of traffic specification, e.g., bandwidth requirements.
In a decentralized protocol, it is important to understand that every node has its own local view
of the subscriptions in the network.
Ideally, each node only needs to know about the subscriptions whose path
includes them. Otherwise, all subscriptions must be broadcasted through the entire network, which creates
an overhead that scales poorly with network size.

For real-time networking, worst-case latency is a type of resource under admission control.
However, new subscriptions do not only influence their own path,
but as a result of interference, they cause additional latency in other parts
of the network. Nevertheless, existing reservations should remain valid after accepting
new reservations.
This can be addressed by using shapers that prevent latency cascading,
e.g., Asynchronous Traffic Shaping~\cite{specht_urgency-based_2016,Qcr}, or by using pre-configured latency thresholds
as upper bounds for possible interference, which is independent of the current reservations~\cite{grigorjew_decentralSP_2020}.
The latter is proposed in this work, which requires that the accumulated latency thresholds are communicated along with the advertisement, and that the individual nodes ensure that these thresholds remain valid.

\subsection{Reservation Protocols}

Reservation protocols for specific \ac{QoS} requirements exist for more than 24 years.
In 1997, RFC 2205~\cite{rfc2205} introduced the Resource Reservation Protocol (RSVP) for layer~3 signaling of \ac{QoS} requirements.
RSVP allows several types of \ac{QoS} -- for deterministic scenarios, it mostly relied on IntServ~\cite{rfc2210} for reshaping individual flows.
The talkers use a token bucket \ac{TSpec} which includes bucket size, bucket rate, and peak data rate (e.g., link speed).
In addition, the reservation signaling includes accumulated intermediate computation results that shall enable decentralized latency estimation.
%most notably, for the error terms C~and~D.
While technically appealing, RSVP/IntServ was only used reluctantly due to the expensive per-stream shaping operation.
%and due to the vague definition of the above error terms.
When RSVP/IntServ is used without support by the intermediate layer~2 devices, their configuration becomes difficult.

Later in 2006, the IEEE 802.1 task group started their work on a layer~2 signaling protocol for deterministic \ac{QoS} requirements.
In 2010, \ac{SRP} was standardized, which mainly relies on the
Multiple Stream Registration Protocol (MSRP) for signaling.
%In non-synchronized scenarios, \ac{SRP} has mostly been designed to work with the Credit-Based Shaper,
%which is a per-queue aggregated shaping mechanism and much cheaper than per-stream token bucket shaping,
%as intended by IntServ.
The SRP talker TSpec consists of a maximum frame size and a maximum number of frames during a pre-defined
\ac{CMI} \cite[Sec.\,35]{802Q}. The \ac{CMI} was typically fixed at \SI{125}{\micro s}
and \SI{250}{\micro s} for the highest traffic classes.
In addition, \ac{SRP} included an accumulated latency field in its signaling. However, this field was
only intended to be used for end-to-end latency estimation by the listener.
SRP has not specified any logic that prevents bridges from exceeding their pre-configured
latency during operation.
Most notably, SRP itself has not specified a latency model but left latency computation open
for other standards, such as the audio/video profile 802.1BA~\cite{BA}.
These latency models were inaccurate in general operation, as demonstrated in Sec.\,\ref{sec:state-of-the-art}.

Finally, in 2018, the IEEE TSN group started working on the successor of SRP, the \ac{RAP}~\cite{Qdd}. It is still under active development
and currently available in draft~0.6. In this context, notable improvements include:
(1) using a more flexible TSpec, e.g., token bucket or variable CMIs;
(2) including more intermediate fields in signaling, such as accumulated minimum latency,
for improved latency model accuracy;
(3) inclusion of mandatory resource bound checks inside the signaling procedure,
e.g., to ensure that pre-configured per-hop latency thresholds remain valid.
Currently, the group is working on concrete models for latency computation
to provide a safe, self-contained QoS signaling protocol.
In addition, heterogeneous deployments are an important aspect of RAP,
which necessitates more communication between neighboring devices,
e.g., sharing shaper parameters.
Latency models, such as this work, could provide valuable input
for this standardization process.

\subsection{Credit-Based Shaper}
\ac{CBS} was originally proposed in the IEEE 802.1Qav standard~\cite{Qav} for audio and video communications.
Fig.~\ref{fig:port} shows the structure of an output port implementing \ac{CBS}. Each output port consists of up to eight queues, with a priority level ranging from 0 to 7, with 7 being the most important priority. Packets are placed in the queue corresponding to their priority, so multiple flows share the same queue. The mapping of priorities to queues depends on the available number of queues in each node. Packets in the same queue are served in a First-In-First-Out manner. Each \ac{CBS} queue (denoted as $(u,v,p)$) has a credit value, $c^{u,v,p}(t)$, and packets are only eligible for transmission if the credit value is greater than or equal to zero. If multiple packets are eligible at the same time, they are transmitted according to their priority. The standard sometimes refers to classes instead of priorities, e.g., class~A instead of priority~7, but we will use the term priority in the following. We explain the functionality of \ac{CBS} with the example shown in Fig.~\ref{fig:credit}.

\textbf{Phase (a)}: At time $t=0$, there are three packets with priority 7 and one packet with priority 6 waiting to be transmitted. For illustration, we assume that a \ac{BE} packet is under transmission when the four \ac{CBS} packets arrive.	
The credit $c^{u,v,p}(t)$ for each \ac{CBS} queue increases with the rate called idleSlope $idSl^{u,v,p}$. The credit increases while scheduled traffic cannot be sent due to an ongoing transmission of other queues, or while the credit is negative. E.g., in Fig.~\ref{fig:credit} both queues need to wait for the transmission of the \ac{BE} packet at $t=0$ and, thus, their credit increases. 
	
\textbf{Phase (b)}: After the \ac{BE} packet, both CBS queues have a positive credit and eligible packets, so the priority is used to decide on transmission. 
During the transmission of packets, the credit of the sending queue decreases with the rate called SendSlope with $sdSl^{u,v,p} = idSl^{u,v,p} - C$
	
\textbf{Phase (c)}: In a worst-case scenario, priority 7 begins with the transmission of a packet with maximum size at the moment its credit reaches 0. This delays the transmission of priority 6 the most.
	
\textbf{Phase (d)}: After transmitting three packets of priority 7, the credit of this queue is negative, and priority 6 is allowed to send. New packets of priority 7 would be delayed until the credit $c^{u,v,7}(t)$ reaches at least 0 again.
	
\textbf{Phase (e)}: If a queue is empty but has a positive credit, the credit value is reset to zero.

The idleSlope of a \ac{CBS} queue determines both the maximum and minimum guaranteed bandwidth for that queue. Due to the non-negative credit constraint, \ac{CBS} queues introduce a per-queue shaping behavior up to a limited maximum bandwidth. As a result, average delays for lower priority and \ac{BE} traffic~\cite{flowshaping, regev_is_2016} are improved. In contrast, traditional schedulers, such as Strict Priority or Deficit Round Robin, do not implement a maximum bandwidth. Since \ac{CBS} is a queue-aggregated shaping mechanism, the shaping is much cheaper than the per-stream shaping of IntServ. With the limited maximum bandwidth, all queues also have a minimum guaranteed bandwidth. 

\begin{figure}[t]
	\centering
	\includegraphics[width=0.85\linewidth]{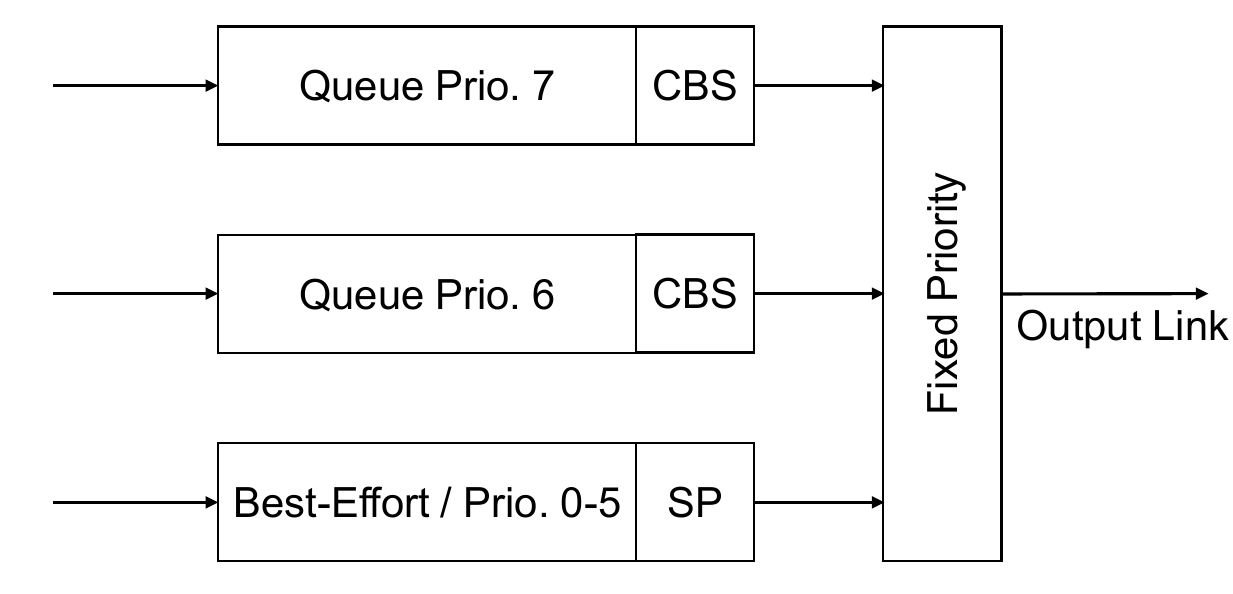}
	\caption{Example output port with CBS queues.}
	\label{fig:port}
\end{figure}
\begin{figure}[t]
	\centering
	\includegraphics[width=\linewidth]{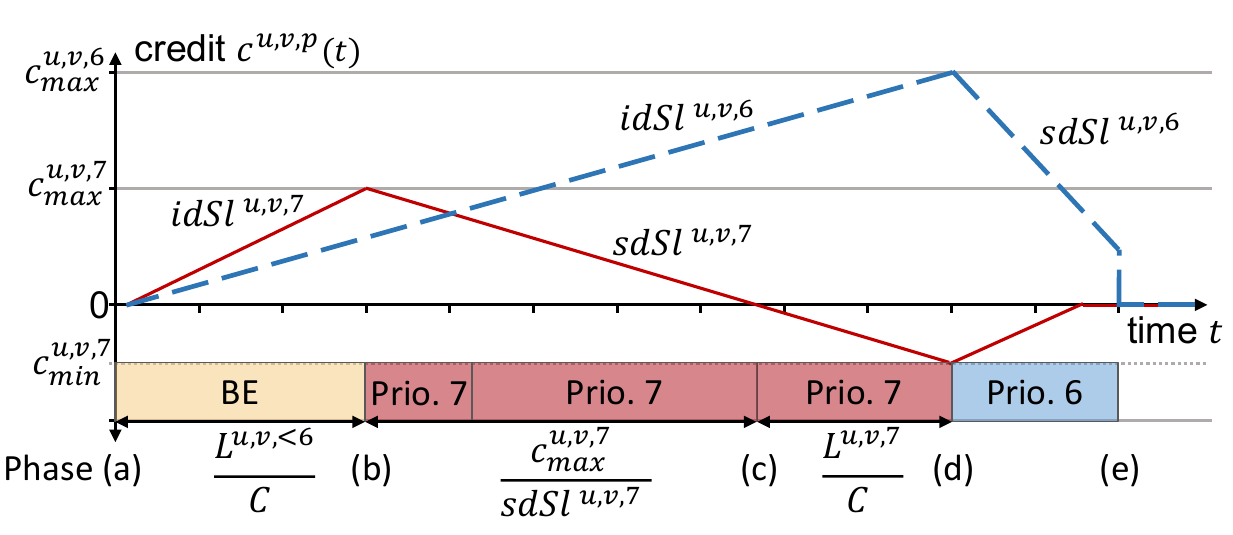}
	\caption{Credit evolution for two CBS queues with packet transmissions illustrated at the bottom.}
	\label{fig:credit}
\end{figure}

\section{State-of-the-Art Latency Calculation}
\label{sec:state-of-the-art}
\ac{TSN} standards have proposed three different methods for computing the maximum local latency at each bridge for decentralized admission control in \ac{CBS} networks. The packet sizes include preamble, \ac{SFD}, and \ac{IPG}. %The first was introduced in the IEEE 802.1BA standard~\cite{BA}. Afterwards, the IEEE 802.1Q standard~\cite{802Q} introduced an own algorithm for the latency calculation and also refers to a technical report with different formulas~\cite{plenary}.

%\todo[inline, color=blue!40]{Dominik}
%\begin{itemize}
%    \item alle Berechnungen mit einheitliche Variablen
%    \item in den Unterkapiteln dann jeweils die Formel eingeführen, die Anzahl der Prioritäten nennen, sowie kurz die Idee hinter der jeweiligen Formel erklären
%    \item Bisher weiß der Leser noch nicht, dass die höchste Prio immer Class A im Standard genannt wird (und B entsprechend). Das müssten wir hier kurz erklären.
%\end{itemize}
%    Several approaches for calculating the maximum delay exist already. We briefly explain the ones we evaluate in Section~\ref{sec:evaluation}.
    
%    \begin{itemize}
%        \item $t_{proc}$: Processing delay
%        \item $t_{prop}$: Propagation delay
%        \item $t_{sf}$: Store-and-forward delay
%        \item $t_{int}$: Interference
%        \item $t_{L_{max}}$: Delay of max. interfering packet
%        \item $t_{L} = \frac{L}{C}$: Transmission time
%        \item $t_{inQueue}$: Input queuing delay (typically not present)
%        \item $t_{oct} = 8\mathrm{bit} / C$: Octet transmit time
%    \end{itemize}
%    \todo{Soll das auch in Table 1?}

\subsection{IEEE 802.1BA}
%802.1BA considers processing, interference, queuing, and transmission delays and is only given for class~A. The interference time is the transmission time of the largest possible packet. The queuing time is the time to transmit all class~A packets at the idle slope rate, not counting the flow of interest \cite{BA}.

The 802.1BA standard assumes a scenario where a maximum length packet and all other packets of one \ac{CMI} with the same priority delay the packet of interest. The calculation is only given for the priority~7 \cite[Section~6.6]{BA}:
\begin{align}
    \overline{D} &= t_{proc} + \underbrace{t_{L_{max}} \vphantom{\left(\frac{X}{X}\right)}}_{\text{other pr.}} + \underbrace{\left(\frac{idSl}{C} \cdot \mathit{CMI} - t_{L_{{FoI}}}\right) \cdot \frac{C}{idSl}}_{\text{same priority}} + t_{L_{FoI}-IPG}
\end{align}
This assumes that, in each \ac{CMI}, traffic with the same priority is received with the link speed $C$ for a fraction of $(idSl/C)$ and is forwarded with the rate $idSl$.

\subsection{IEEE 802.1Q}
802.1Q-2018 Annex L.3 is the only standard that provides delays for more than one priority~\cite{802Q}. It considers input queuing, interference, transmission, propagation, and store-and-forward delays:
\begin{equation}
    \overline{D} = t_{inQueue} + t_{int} + t_{L} + t_{prop} + t_{sf}
\end{equation}
The input queuing and store-and-forward delays are (part of) the processing delay of the hardware.
The interference delay $t_{int}$ is the sum of queuing, fan-in, and permanent buffer delays ($t_{queue}$, $t_{fan-in}$, and $t_{perm}$ respectively). The queuing delay is the time it takes to transmit one packet with maximum size and all higher-priority packets~\cite{802Q}:
\begin{equation}
    t_{queue} = \begin{cases}
    L_{max} / C & \text{for prio. 7} \\
    (L_{max} + L^{(7)}) / (C - idSl^{(7)}) & \text{for prio. 6} \\
\end{cases}
\end{equation}
802.1Q-2018 defines the fan-in delay $t_{fan-in}$ as "delay caused by other frames in the same class [...] that arrive at more-or-less the same time from different input ports"~\cite[p.~1951]{802Q}. 
%They use the maximum reserved bandwidth for a given class on the input and output ports. 
The procedure is explained in~\cite[Section~L.3.1.2]{802Q}. Since the packets of a fan-in burst reside in the buffers for some time, they cause further delays until they leave the system, which is reflected by the permanent buffer delay $t_{perm}$. Because the buffered packets are the result of a fan-in, there is only either fan-in or permanent buffer delay. 
%the permanent buffer is equal to the fan-in delay. This means that t

%\begin{equation}
%    t_{perm} = t_{fan-in}
%\end{equation}

% However, the fan-in and permanent buffer delays are mutually exclusive, i.e., after a fan-in burst, it becomes permanent. A second fan-in burst can only happen once the permanent buffer drained. Since both delays are the same, only one of them is needed.

%\begin{equation}
%    t_{int} = t_{queue} + t_{fan-in}
%\end{equation}

\subsection{IEEE 802.1Q - Plenary Reference}
The following equations are the result of a plenary discussion, which is referred to in the 802.1Q standard. All calculations are for priority~7 and \SI[per-mode=symbol]{100}{\mega\bit\per\second} links\footnote{Equations for 
\SI[per-mode=symbol]{1}{\giga\bit\per\second} links have been added afterwards by John Fuller, but have not been validated by the committee.}. 

%The delay consists of interference, queuing, and transmission delay. 
The calculation uses a scenario where a maximum-length packet and frames from all other input ports are placed in the queue at the same time. Then another set of priority~7 packets arrives together with the packet of interest, which is the last frame in the queue.
Unlike the previous two approaches, this calculation uses the number of octets and not the time of each contributing factor~\cite{plenary}:
\begin{equation}
    \overline{D} = \Biggl(\underbrace{L_{max}\vphantom{\left(\frac{X}{X}\right)}}_{\text{other pr.}} + \underbrace{2 \cdot \left(R_{max} - L_{FoI}\right) - \left\lceil\frac{R_{max} - L_{FoI}}{N}\right\rceil}_{\text{same priority}} + L_{FoI} \Biggr) \cdot t_{oct}
\end{equation}
With maximum reserved octets:
\begin{equation}
    R_{max} = \left\lfloor \frac{\mathit{CMI}}{t_{oct}} \cdot \frac{idSl}{C}  \right\rfloor,~N = \min\left(|\mathcal{L}^-|~, \ \left\lfloor \frac{R_{max} - L_{FoI}}{L_{min}} \right\rfloor\right)
\end{equation}

% \subsubsection{1Gbit/s Approach 1}

% This approach assumes that the delay is, at most, the interference time and the time all other traffic could use per \ac{CMI}, which is \ac{CMI} \cite{plenary}.

%\begin{equation}
%    \bar D = \frac{maxReserved}{idSl} + \frac{L_{max}}{C} = CMI + \frac{L_{max}}{C}
%\end{equation}

%\subsubsection{1Gbit/s Approach 2}

%To calculate the delay for a combination of 100Mbit/s and 1Gbit/s links \cite{plenary} uses the time to transmit all queued bits, the maximum burst time, and the transmission time of the stream of interest \cite{plenary}.

%\begin{equation}
%    \bar D = \frac{totalBitsQueued}{idSl} - \left(\frac{c_{min} - c_{max}}{sdSl}\right) - \frac{L}{C} + \frac{L_{max}}{C}
%\end{equation}

%\begin{equation}
%    totalBitsQueued = N \cdot maxBurstSize + \left\lfloor \frac{c_{min} - c_{max}}{sdSl \cdot CMI} \right\rfloor \cdot L
%\end{equation}

%\begin{equation}
%    maxBurstSize_n = C \cdot \frac{c_{min} - c_{max}}{sdSl}
%\end{equation}

\section{Infinite Latency Bound}
\label{sec:infinite}
In this section, we provide a formal proof that the local bridge latency cannot be upper bounded with the current standard procedure. A simulation which provides counterexamples to the existing standards' equations is given in Section~\ref{sec:evaluation}.

\begin{figure*}[t]
	\centering
	\includegraphics[width=0.95\linewidth]{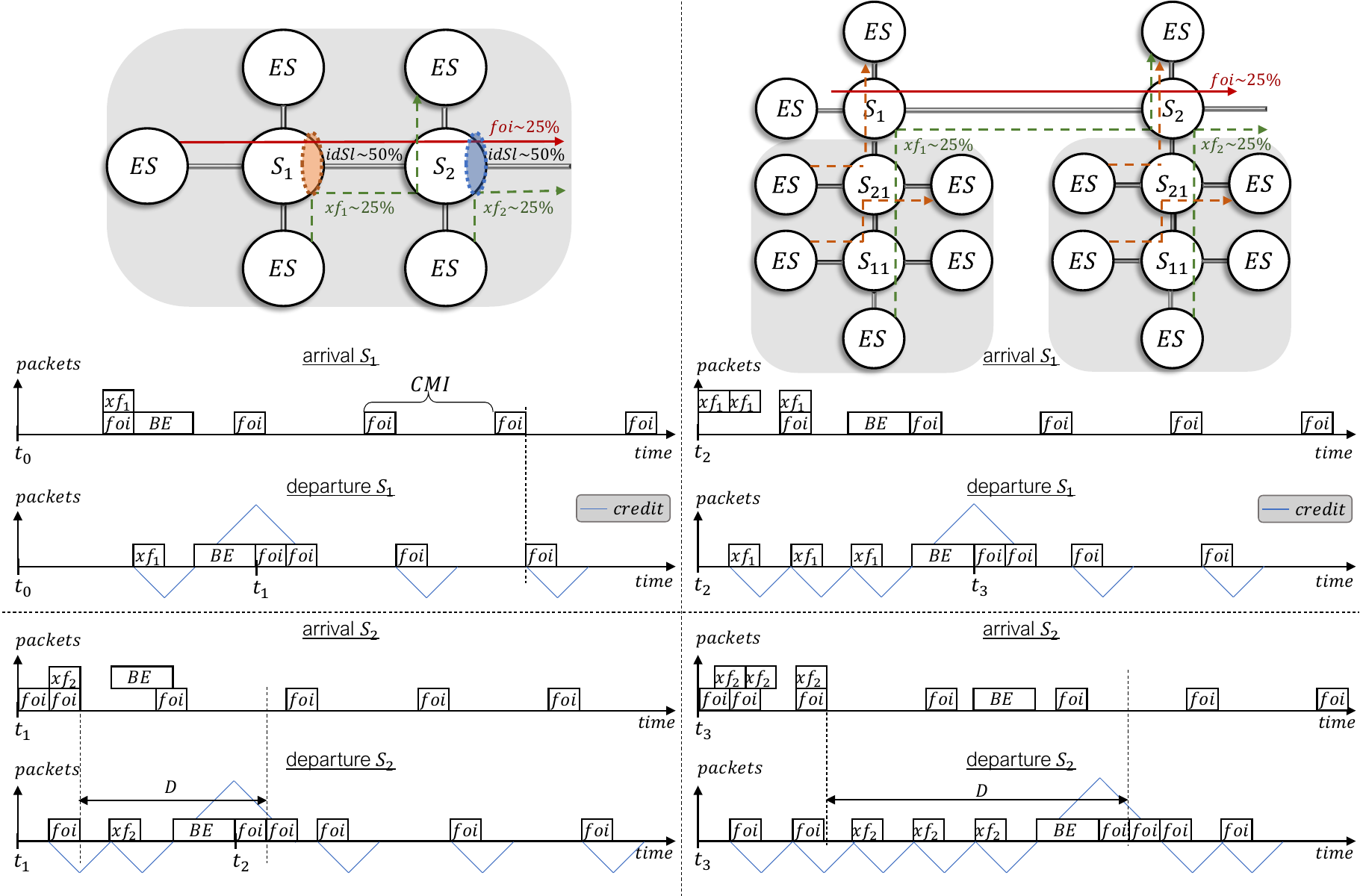}
	\caption{Transmission behavior with burstiness increase, assuming interfering traffic and store-and-forward. \\ \textbf{Top}: Topology with priority~7 flows. \textbf{Center}: Arrival and departure at port of $S_1$. \textbf{Bottom}: Arrival and departure at port of $S_2$.\\ \textbf{Left}: Initial network (recursion level 0). \textbf{Right}: Recursive topology increases the burst of cross-flows (recursion level 1).}
	\label{fig:approach-illustration}
\end{figure*}

%Our proof is only valid for the decentral admission control procedure of the current standards. 
The standards assume that the bandwidth and idleSlope of all connected links are given at each bridge. In addition, the protocols distribute each flow’s \ac{TSpec}, which includes the stream ID, priority, sending interval, and data. It does not assume further information, e.g., about a flow's interferences on its path or the overall network topology. 

To meet the end-to-end delay requirements, each bridge adds a maximum local latency to the accumulated latency field of the reservation protocol. Thus, the local latency guarantee may not change when new flows register or deregister afterwards. %The goal is to derive this maximum bridge-local latency to fill the accumulated latency field in the reservation protocol. 
The standards assume a maximum local latency can be guaranteed, solely by checking whether the idleSlope would be surpassed with a new flow’s reservation. This section will prove that further information is required. 

We start by creating a subnetwork (level 0 of recursion) to demonstrate that a \ac{FoI} can accumulate bursts of packets as it travels along its path. Then, we append this subnetwork, in a way that cross-flows accumulate this burst for interference with the \ac{FoI} (level~1 of recursion, compare topologies in Fig.~\ref{fig:approach-illustration}). This increases the \ac{FoI}'s burstiness again. We recursively append the new subnetwork, to model more bursty interfering flows (recursion level $n, n\to\infty$), resulting in an unbounded delay for the \ac{FoI}.
\begin{theorem}[Unbounded Local Delay]
\label{theo:unbounded_local_delay}
With the given information for the decentralized \ac{TSN} reservation protocols (idleSlope and \ac{TSpec}) the maximum local latencies in \ac{CBS} are unbounded.
\end{theorem}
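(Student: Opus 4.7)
The plan is to prove unboundedness by recursive construction, relying on the fact that the information available at a bridge (only per-flow source TSpec and neighbor idleSlopes) is invariant under replacement of the upstream sub-network with a more adversarial one. Hence, if I can exhibit a family of topologies, all admissible under the standards' idleSlope check, in which the worst-case delay at a fixed target queue grows without bound, then no finite local bound $\overline{D}$ computable from that information can be correct.

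First, I would formalize the base case (level~0) pictured on the left of Figure~\ref{fig:approach-illustration}: a flow of interest $f$ enters switch $S_1$ and shares a priority~7 CBS queue with cross-flows arriving from other input links. Using only the store-and-forward scheduling behaviour already introduced, I would argue that even though every flow complies with its TSpec at its source, alignment of cross-traffic arrivals can force $f$'s packets to queue behind them and subsequently leave $S_1$ bunched into a downstream burst strictly larger than at the source. This gives a quantitative lower bound $b_1 > b_0$ on the output burstiness of $f$ and establishes per-hop burst amplification by a factor greater than one.

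For the inductive step (level $n \to n{+}1$) I would reuse the same subnetwork, but feed its flow of interest as a \emph{cross-flow} into a fresh copy of the base-case topology, as in the right-hand side of Figure~\ref{fig:approach-illustration}. By the previous step, these cross-flows enter the new switch with burst $b_n$, while their source TSpec is unchanged and the admission check at the new switch is therefore indifferent to which level they come from. Applying the base-case amplification at this new switch produces $b_{n+1} \ge g(b_n)$ for some $g$ with $g(b) > b$. Iterating $n \to \infty$, the burst (and therefore the queuing delay) at the target queue exceeds any prescribed constant, contradicting the existence of a finite local latency bound derivable from TSpec and idleSlope alone.

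The main obstacle I anticipate is feasibility: every intermediate CBS queue enforces a long-term rate bounded by its $idSl$, which could in principle re-shape the bursts back down before the next amplification stage and stall the recursion. I would address this by choosing sparse CMIs and small MFS so that idleSlope utilisation stays well below the reservation threshold at every queue visited, leaving enough slack for momentary burst alignment without violating any admission constraint. Concretely, I must show that the per-queue shaping curve $\sigma_{CBS}$ does not dominate the aggregate arrival curve of interfering flows on the relevant time scale, so that the burst survives to the next amplifying stage. This is exactly the point at which the gap between a flow's source TSpec and its true downstream arrival curve becomes visible, and it is the information that the current standards' admission procedure has no means of tracking.
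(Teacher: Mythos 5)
Your proposal is correct and follows essentially the same route as the paper: the same base topology exhibiting per-hop burst accumulation, the same recursive re-use of the amplified flow as cross-traffic for a fresh copy (the level-$n$ construction of Fig.~\ref{fig:approach-illustration}), and the same key observation that the admission check sees only source TSpec and idleSlope and therefore cannot distinguish recursion levels. Your anticipated obstacle (CBS re-shaping the burst away) is resolved in the paper by serving each burst at rate $idSl$ in Eq.~\eqref{equ:proof_D_i}, which still exceeds the flows' source rates and so preserves downstream burst growth; the only point to tighten in your write-up is that $g(b)>b$ alone does not force divergence, so you should note that each level adds at least one full packet to the burst (as the ceiling in Eq.~\eqref{equ:proof_b_f} guarantees).
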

\begin{proof}
The idleSlope denotes the maximum guaranteed fraction of the bandwidth averaged over a time period~\cite{Qav}, thus, \ac{CBS} packets are forwarded with the respective idleSlope. It is assumed that the reserved flow rates are independent of the configured idleSlopes~\cite{queck}, but do not surpass the idleSlope. For simplicity, we assume processing and propagation delay to be zero. We provide the proof for the highest priority only and omit the indices. Instead, we denote each variable with a recursion level $n$ and the hop ID $i$ as $\bullet^{i,n}$. The proof for the lower priorities is done analogously.

Let us denote the term \textit{burstiness} of a flow as the sum of data in packets that are transmitted consecutively with a rate $\ge idSl$. 
Assume the network provided on the left of Fig.~\ref{fig:approach-illustration}, with idleSlopes configured at 50\% of link speed and three flows transmitted at their source with 25\% of the link speed, as illustrated. We define the left topology as recursion level 0. The formulas below model the traffic behavior in this network. Let $D^{i,n}$ be an estimation of the local worst-case delay at hop $S_i$. By definition, we can define the accumulated maximum and minimum latency on the \ac{FoI}'s path before hop $S_i$ in store-and-forward networks as:
\begin{equation}
    \overline{\mathcal{D}}^{i,n} = \sum_{j=1}^{i-1}D^{j,n},\quad \text{and}\quad\underline{\mathcal{D}}^{i,n} = \sum_{j=1}^{i-1}\frac{l}{C}.
\end{equation}
The burstiness of a flow can increase until hop $S_i$ by
\begin{equation}\label{equ:proof_b_f}
    b_{FoI}^{i,n} = \Bigg\lceil \underbrace{\frac{\overline{\mathcal{D}}^{i,n}-\underline{\mathcal{D}}^{i,n}}{\mathit{CMI}}}_{(a)}+\frac{\Delta t}{\mathit{CMI}} \Bigg\rceil~,
\end{equation}
where $\Delta t$ is the time to transmit the burst which accumulated in (a). If we denote the burstiness of cross flows as $\hat{b}^{i,n}$, we can derive the worst-case per-hop delay:
\begin{equation}\label{equ:proof_D_i}
    D^{i,n} = \frac{b_{FoI}^{i,n}+\hat{b}^{i,n}}{idSl} + \frac{L_{max}+L}{C}
\end{equation}
In the left topology of Fig.~\ref{fig:approach-illustration}, $\hat{b}^{i,n}_{FoI}$ is upper bounded at each hop $S_i$ by the interference of one \ac{BE} packet with maximum size and one packet of the interfering cross flows, resulting in $D^{1,0}=L/idSl+(L_{max}+L)/C$ as illustrated at \textit{departure $S_1$} in Fig.~\ref{fig:approach-illustration}.
After crossing $S_1$, the \ac{FoI} exhibits a burstiness of two packets. In $S_2$, the interference is the same, but the burstiness of the \ac{FoI} increases further. The result is shown at the left-bottom of Fig.~\ref{fig:approach-illustration}. $D^{2,0}$ is limited by the path length of the \ac{FoI}. We will now show that even with a fixed path length, the delay can be unlimited. 

As we are not aware of the topology, we are also not aware of the path of interfering flows. As a result, we may assume a recursive topology as shown on the right of Fig.~\ref{fig:approach-illustration}. With the recursiveness, the cross flows themselves can have the burstiness which the \ac{FoI} experiences after traversing $S_2$ in the first network, meaning $\forall i, \hat{b}^{i,1} = b_{FoI}^{2,0}$. According to Eq.~\eqref{equ:proof_D_i}, the \ac{FoI}'s delay increases when compared to the first topology, which again results in an increased burstiness, according to Eq.~\eqref{equ:proof_b_f}. We consider only bursts from cross flows. Non-bursty packets are either lost or not sent. This is shown at the right of Fig.~\ref{fig:approach-illustration}. We can easily continue this recursive behavior, using the output after $S_2$ as increased cross-traffic interference in each recursion level:
\begin{equation}
    \hat{b}^{i,n} = b_{FoI}^{2,n-1}, \forall i, n=1,2,...
\end{equation}
As a result, the cross-flows can add up to infinite burstiness, surpassing the idleSlope for an undefined time and infinitely increasing the delay $D^{2,n}$, $n\to\infty$, of our \ac{FoI}. 
\end{proof}

Our simulations in Section~\ref{sec:eval_comparison} show that this increase in burstiness results in significant delays, even in small networks. % and without omitting non-bursty cross-flow packets.
The infinite burstiness can occur at any network where the sum of idleSlopes from the input links can surpass the idleSlope of the output link, which typically is the case for most networks.

%We thus showed that - with the local information provided by the standard and by only checking the idleSlope - an upper bounding delay formula cannot be provided.
%The next section will therefore provide a new approach which uses additional information and admission control to derive reliable delays.

\section{Reservation with Bounded Latency}
\label{sec:ourapproach}
%T
%In this section, we will present our approach and provide its proof using the mathematical analysis method \ac{NC}. We will also outline the necessary modifications to the standards. %Afterwards, we provide a simulation-based evaluation of the proposed method using both synthetic and realistic networks.

\subsection{General Approach}
To ensure safe configurations, we introduce our approach, \ac{SSRP}, which implements the following new concepts to the existing reservation process: (1) We define pre-configured delay budgets for each hop and check, that the actual worst-case delay does not surpass this budget, and (2) we keep track of the burstiness of each flow, to get the current worst-case delay after each reservation. These concepts do not require changes to the protocol. In addition, we propose to (3) distribute the information about the idleSlopes and priority-to-queue mappings of neighboring devices to each node. This allows us to derive tighter bounds, as will be shown in Eq.~\eqref{equ:improved-queue-arrival}.

\textbf{Setup.}
Before reserving flows in the network, each \ac{CBS} output queue is pre-configured with an idleSlope $idSl^{u,v,p}$ and a maximum per-hop delay budget $\overline{D}^{u,v,p}$.% for each priority~$p$ in each output port.

\textbf{Flow Reservation.} During the stream advertisement, each bridge updates the maximum accumulated latency field of the reservation protocol $\overline{\mathcal{D}}_f^{u,v,p}$ using $\overline{D}^{u,v,p}$. Respectively, for the minimum latencies. When received by the listener, the maximum accumulated latency is checked against the flow's end-to-end deadline requirement.

During the listener subscription, each hop verifies that the (currently computed) worst-case delay $D^{u,v,p}$ after the reservation of the new flow does not violate the (pre-configured) guaranteed delay $\overline{D}^{u,v,p}$. %We explain this procedure in the next section in more detail. 
If the guaranteed delay is not surpassed, the reservation of the new flow is accepted.
%Be aware that $\overline{D}$ is pre-configured at every hop. It provides an upper bound on the \textit{allowed} maximum delay at this hop. In contrast, $D^{u,v,p}$ is the worst-case delay which a flow might experience at each hop at hop $(u,v,p)$ considering the current reservations of flows. Each bridge only allows the new reservation if .

\subsection{Worst-Case Delay Computation $\mathbf{D^{u,v,p}}$}
To derive the worst-case delay for a flow $f$, we apply the mathematical framework \ac{NC} to determine guaranteed upper bounds on transmission and queuing delays in networks. Let us denote the cumulative input and output functions for each queue in the network as $R(t)$ and $R^*(t)$ respectively. We can define upper bounds on the input and output at each queue using arrival, shaping, and service curves.
\begin{definition}\label{def:arrival}
  (Arrival Curve~\cite{boudec_network_2012}): Let $\alpha(t)$ denote the upper bound on the arriving data over any period of length~$t$:
  \begin{equation}
    \forall s \leq t: R(t) - R(s) \leq \alpha(t-s)
\end{equation}
\end{definition}
\begin{definition}
  (Service Curve~\cite{boudec_network_2012}): Let $\beta(t)$ be the minimum service of a queue during any period of length~$t$:
  \begin{equation}
    R^*(t) \ge \inf_{s\le t}\big\{R(s) + \beta(t-s) \big\}
\end{equation}
\end{definition}
\begin{definition}
\label{def:shaper}
  (Shaping Curve~\cite{bouillard_deterministic_2018}) A queue offers a shaping curve \(\sigma\) if its output $R^*$ has \(\sigma\) as an arrival curve: %{\cite[p~30]{boudec_network_2012}}
\begin{equation}
    \forall s \leq t: R^*(t) - R^*(s) \leq \sigma(t-s)
\end{equation}
\end{definition}
As a result, the shaping curve of a queue also represents an arrival curve for subsequent queues.
\begin{definition}
  (Per-Hop Delay Bound~\cite{boudec_network_2012}) Let $h(f(t),g(t))$ be the maximum horizontal distance between $f(t)$ and $g(t)$. The worst-case delay at queue $(u,v,p)$ is:
  \begin{equation}
\label{equ:per-hop-delay}
   D^{u,v,p} \le h\Big(\alpha^{u,v,p}(t), \beta^{u,v,p}(t)\Big)
\end{equation}
\end{definition}
Intuitively, this is the maximum time between the arrival and departure of traffic in queue $(u,v,p)$. We use this bound after a new reservation to ensure that $D^{u,v,p}\le \overline{D}^{u,v,p}$. The arrival at each queue is the sum of arrivals of all flows, from each input link, which can be defined by
\begin{equation}\label{equ:arrival_link_shaped}
   \alpha^{u,v,p}(t) = \sum_{l \in \mathcal{L}^-_{u,v,p}} \Big(\alpha^{u,v,p}_l(t) \wedge \sigma_l(t) \Big)
\end{equation}
with $x \wedge y = \min\{x, y\}$ and
\begin{equation}
\label{equ:queue-arrival}
   \alpha^{u,v,p}_l(t) = \sum_{f\in F^{u,v,p}_l} \alpha_f(t+\Delta d^{u,v,p}_f)
\end{equation}
\begin{equation}
\label{equ:max-min-delay}
    \Delta d^{u,v,p}_f =  \overline{\mathcal{D}}_f^{u,v,p} - \underline{\mathcal{D}}_f^{u,v,p}.
\end{equation}
As a result, $\Delta d^{u,v,p}_f$ is the maximum accumulated delay difference of a flow $f$ before queue $(u,v,p)$, defined by:
\begin{equation}
    \overline{\mathcal{D}}_f^{u,v,p} = \sum_{q \in \Phi_f^{u,v,p}} \overline{D}^q
,\quad \text{and}\quad
    \underline{\mathcal{D}}_f^{u,v,p} = |\Phi_f^{u,v,p}| \cdot \frac{l_{f}}{C}
\end{equation}
If we additionally provide each bridge with the information about the \ac{CBS} shaping behavior of their neighboring nodes, we can improve Eq.~\eqref{equ:queue-arrival}. Instead of only shaping all flows from the same preceding link, we can additionally shape all flows from the same preceding queue. Then, Eq.~\eqref{equ:queue-arrival} improves to:
\begin{equation}
\label{equ:improved-queue-arrival}
   \alpha^{u,v,p}_l(t) = \sum_{q \in Q^-_l} \Big( \big( \sum_{f\in F^{u,v,p}_{l,q}} \alpha_f(t+\Delta d_f^{u,v,p}) \big) \wedge \sigma^q_{CBS}(t) \Big)
\end{equation}
If the number of queues is the same for each bridge, the preceding queue and priority are the same. However, since \ac{CBS} allows shared queues for multiple priorities, we have to shape flows with regard to their priority-to-queue mapping. 
We will discuss how this improvement can be introduced to the standard in Section~\ref{sec:protocol-adaption}. Fig.~\ref{fig:example-approach} illustrates the intuitive approach behind the above calculations. The definition for $\alpha_f$, $\beta^{u,v,p}$, $\sigma_l$, and $\sigma^q_{CBS}$ will be introduced in the next section, where we also refer to their original proofs.
 % Using equation~\eqref{equ:per-hop-delay}, each bridge can check $D_{FoI}^{u,v,p} \le \overline{D}^{u,v,p} $ before reserving a new flow and, thus, guarantee a maximum latency for all registered flows. 

\begin{figure}[t]
	\centering
	\includegraphics[width=0.9\linewidth]{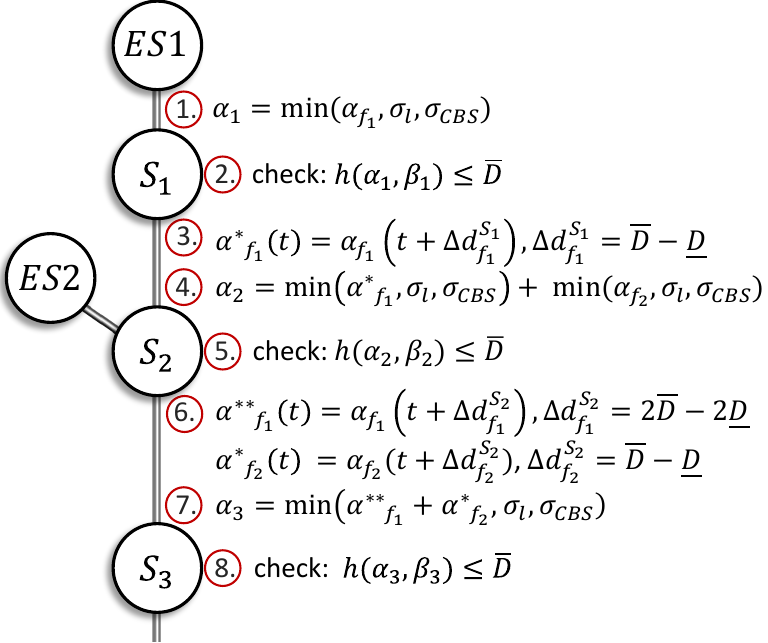}
	\caption{Iterative reservation procedure. We assume flow $f_1$ from $ES1$ and flow $f_2$ from $ES2$ to traverse both $S_3$. Each bridge is configured with the same maximum delay bound.}
	\label{fig:example-approach}
\end{figure}

\subsection{Proof}
%In the following, we provide the required theorems. For existing theorems, we refer to their original proofs. New theorems are proven below. 

In \ac{CBS} networks, the upper bound on the transmission of flows at their source has been derived and proven for both, periodic and aperiodic flows in~\cite{framework}. For simplicity, we only provide the results for periodic flows below.
\begin{theorem}[Flow Arrival {\cite[Theorem~1]{framework}}] 
\label{theo:arrival}
Each flow~$f$ with \ac{TSpec} of \ac{CMI}, \ac{MIF}, and \ac{MFS}, admits as arrival curve:
\begin{equation}
\label{equ:per-flow-arrival}
    \alpha_f(t) = m \cdot \Big\lceil \frac{t}{\mathit{CMI}} \Big\rceil 
\end{equation}
where $m=\mathit{MFS}\cdot \mathit{MIF}\cdot 8$. 
\end{theorem}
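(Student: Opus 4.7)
The plan, since Theorem~\ref{theo:arrival} restates~\cite[Theorem~1]{framework}, is to give a direct, self-contained derivation from the \ac{TSpec} and Definition~\ref{def:arrival}. First I would translate the \ac{TSpec} into a per-interval bound on produced traffic: during every sending interval of length $\mathit{CMI}$, flow~$f$ emits at most $\mathit{MIF}$ frames of size at most $\mathit{MFS}$ bytes, i.e., at most $m = \mathit{MFS}\cdot\mathit{MIF}\cdot 8$ bits.

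Next, I would fix an arbitrary time window $[s,t]$ with $\tau = t-s$ and count how many sending intervals can contribute traffic to it. Because consecutive sending intervals of the source partition the time axis into disjoint slots of length $\mathit{CMI}$, and at most $m$ bits are produced per slot, the worst-case alignment of the window with the source's sending schedule captures the starting bursts of at most $\lceil \tau/\mathit{CMI}\rceil$ consecutive slots (achieved when the left edge of the window coincides with the start of a burst). Multiplying by $m$ yields
\begin{equation*}
R(t) - R(s) \;\le\; m\cdot\bigl\lceil \tau/\mathit{CMI} \bigr\rceil,
\end{equation*}
which matches Definition~\ref{def:arrival} for $\alpha_f(\tau) = m\cdot\lceil \tau/\mathit{CMI}\rceil$.

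The main obstacle I foresee is the boundary case where $\tau$ is an integer multiple of $\mathit{CMI}$: the slot-counting argument differs by one depending on whether the cumulative input function $R$ is taken to be left- or right-continuous at burst epochs, and the ceiling convention has to be aligned with this choice so that the bound is sharp rather than off by one full burst of $m$ bits. Since Definition~\ref{def:arrival} follows the standard network-calculus convention, this purely technical point is settled exactly as in~\cite{framework}, whose proof we may invoke directly; the remaining novelty in our setting lies downstream, in combining this per-flow curve with the shaping curves $\sigma_l$ and $\sigma^q_{CBS}$ via Eqs.~\eqref{equ:arrival_link_shaped} and~\eqref{equ:improved-queue-arrival}.
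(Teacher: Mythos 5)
The paper does not actually prove Theorem~\ref{theo:arrival}; it imports the statement verbatim from \cite[Theorem~1]{framework} and only records the restriction to periodic flows. Your proposal therefore attempts more than the paper does, and it must be judged on its own merits as a self-contained derivation.

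As written, one step would fail. You first reduce the \ac{TSpec} to the statement that at most $m$ bits are produced in each of the \emph{fixed, disjoint} slots of length $\mathit{CMI}$, and then count how many slots a window of length $\tau$ can touch. That fixed-partition bound is strictly weaker than what the staircase curve requires: a source that emits $m$ bits at time $\mathit{CMI}-\epsilon$ (end of slot~0) and another $m$ bits at time $\mathit{CMI}$ (start of slot~1) satisfies the per-slot bound, yet a window of length $2\epsilon$ then carries $2m$ bits while $\alpha_f(2\epsilon)=m$. The conclusion that a window of length $\tau$ captures at most $\lceil\tau/\mathit{CMI}\rceil$ bursts is only valid under the additional assumption---which you smuggle in through the phrases ``starting bursts'' and ``sending schedule''---that each interval's traffic is released as a single burst at the interval boundary, so that burst epochs form an arithmetic progression with spacing $\mathit{CMI}$. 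That is precisely the periodic-flow model under which the theorem is stated, so the argument is repairable: either make that assumption explicit before the slot-counting step, or replace the fixed-partition bound by the sliding-window property $R(t)-R(s)\le m$ for all $t-s\le\mathit{CMI}$, from which the general bound follows by splitting $[s,t]$ into $\lceil\tau/\mathit{CMI}\rceil$ subintervals of length at most $\mathit{CMI}$. Your remaining caveat---the off-by-one at $\tau$ equal to integer multiples of $\mathit{CMI}$ and its resolution via the left-continuity convention for $R$---is correctly identified, and deferring that convention to \cite{framework} is legitimate.
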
 
This results in a staircase function, which omits the necessity of the ceiling function, as has been required in Eq.~\eqref{equ:proof_b_f}. %\textit{CMI}, \textit{MIF}, and \textit{MFS} are provided in the TSpec of the talker attribute messages in \ac{TSN}.
\begin{theorem}[Flow Aggregate {\cite[p.~116]{cruz_calculus_1991}}]
\label{theo:flow_aggregate}
Assume flow $f_1$ and $f_2$ with $\alpha_1$ and $\alpha_2$ as their arrival curve, respectively. When scheduled by the same queue, we say that they arrive as aggregate. Then, the aggregate has $\alpha_1 + \alpha_2$ as arrival curve. 
\end{theorem}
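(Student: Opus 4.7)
The plan is to prove this directly from Definition~\ref{def:arrival} of an arrival curve. Let $R_1(t)$ and $R_2(t)$ denote the cumulative input functions of flows $f_1$ and $f_2$ respectively, and let $R(t) = R_1(t) + R_2(t)$ be the cumulative input function of the aggregate entering the shared queue. Since both flows are scheduled by the same queue, this additive definition of $R(t)$ is precisely what ``arrives as aggregate'' means.

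Next, I would invoke the hypothesis that $\alpha_1$ and $\alpha_2$ are arrival curves for $f_1$ and $f_2$. By Definition~\ref{def:arrival}, for every $s \leq t$ we have $R_1(t) - R_1(s) \leq \alpha_1(t-s)$ and $R_2(t) - R_2(s) \leq \alpha_2(t-s)$. Adding these two inequalities termwise yields
\begin{equation}
R(t) - R(s) = \bigl(R_1(t) - R_1(s)\bigr) + \bigl(R_2(t) - R_2(s)\bigr) \leq \alpha_1(t-s) + \alpha_2(t-s),
\end{equation}
which is exactly the statement that $\alpha_1 + \alpha_2$ is an arrival curve for the aggregate, again by Definition~\ref{def:arrival}.

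There is essentially no obstacle here: the result follows from linearity of the cumulative counting process and the termwise additivity of the arrival bounds. The only subtle point worth emphasizing in the write-up is that the inequalities are evaluated at the \emph{same} pair $(s,t)$ for both flows, so no further supremum or infimum argument is needed; the bound is tight in the sense of Theorem~\ref{theo:arrival}'s construction whenever both individual bounds are simultaneously attained. The extension to an arbitrary finite aggregate of flows $f_1,\ldots,f_n$ follows by a trivial induction on $n$, which justifies the use of a sum over $f \in F^{u,v,p}_l$ in Eq.~\eqref{equ:queue-arrival}.
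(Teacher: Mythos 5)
Your proof is correct and is the standard argument: the paper itself does not prove this theorem but defers to the cited reference (Cruz), and your direct derivation from Definition~\ref{def:arrival} via termwise addition of the two inequalities at the same pair $(s,t)$ is exactly the canonical proof found there. The closing remark on induction to $n$ flows is a sensible addition, since the paper does implicitly use the $n$-flow version in Eq.~\eqref{equ:queue-arrival}.
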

\begin{theorem}[Output from Delay]
\label{theo:output_delay}
The output arrival of a flow traversing a queue with a maximum delay of $\overline{D}$ and a minimum delay $\underline{D}$ can be obtained by
\begin{equation}
    \alpha^*(t) = \alpha(t+\overline{D}-\underline{D}).
\end{equation}
\end{theorem}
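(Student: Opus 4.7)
The plan is to argue directly from the definition of arrival curve (Def.~\ref{def:arrival}) and the guarantee on per-bit sojourn time. I would first make precise what ``delay in $[\underline{D},\overline{D}]$'' means on the bit level: any infinitesimal unit of traffic that enters the queue at time $u$ leaves at some time $u'$ with $u+\underline{D}\le u' \le u+\overline{D}$. Equivalently, a bit that leaves at time $d$ must have entered at some time in the interval $[d-\overline{D},\,d-\underline{D}]$.

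Next, I would pick arbitrary $s\le t$ and bound $R^*(t)-R^*(s)$. Every bit counted on the left departed during $(s,t]$, so by the previous observation it must have entered during $(s-\overline{D},\,t-\underline{D}]$. Monotonicity of $R$ therefore yields the key inequality
\begin{equation*}
    R^*(t)-R^*(s) \;\le\; R(t-\underline{D}) - R(s-\overline{D}).
\end{equation*}
(If $s-\overline{D}<0$, the right-hand side is interpreted with $R(\cdot)=0$ before the origin, and the inequality still holds.) Applying Def.~\ref{def:arrival} to $R$ with the two time instants $s-\overline{D}\le t-\underline{D}$ then gives
\begin{equation*}
    R(t-\underline{D}) - R(s-\overline{D}) \;\le\; \alpha\big((t-\underline{D})-(s-\overline{D})\big) \;=\; \alpha\big((t-s)+\overline{D}-\underline{D}\big).
\end{equation*}
Combining the two inequalities shows that $\alpha^*(\tau):=\alpha(\tau+\overline{D}-\underline{D})$ satisfies Def.~\ref{def:arrival} for $R^*$, which is exactly the claim.

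I do not expect serious obstacles. The step requiring the most care is the translation from the ``per-hop delay bound'' semantics used in Def.~\ref{equ:per-hop-delay} (horizontal distance between $\alpha$ and $\beta$) to the per-bit statement that the sojourn time of every unit of traffic lies in $[\underline{D},\overline{D}]$; under FIFO service this is immediate, and for the purposes of this paper FIFO within a CBS queue is the standing assumption. A minor bookkeeping point is handling the boundary case $s-\overline{D}<0$, which is resolved by the convention $R(t)=0$ for $t\le 0$ and the fact that arrival curves are non-decreasing. Once these conventions are fixed, the derivation above is essentially a two-line calculation.
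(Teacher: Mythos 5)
Your proof is correct and follows essentially the same route as the paper's: the key inequality $R^*(t)-R^*(s)\le R(t-\underline{D})-R(s-\overline{D})$ is exactly the paper's Eq.~\eqref{equ:proof1}, and the subsequent application of Definition~\ref{def:arrival} matches Eqs.~\eqref{equ:proof2}--\eqref{equ:proof3}. Your explicit per-bit sojourn-time justification of that inequality (and the remark that this presumes FIFO service within the queue) is the same reasoning the paper sketches informally after its proof, citing Cruz for the $\overline{D}$ half.
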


\begin{proof}
Using the definition of arrival curves, $\underline{D}$, and $\overline{D}$, we can define:
\begin{align}
R^*(y)-R^*(x) & \leq R(y-\underline{D})-R(x-\overline{D})  \label{equ:proof1}\\
& \leq \alpha(y-\underline{D}-x+\overline{D}) \label{equ:proof2} \\
& =\alpha^*(y-x)\label{equ:proof3}
\end{align}
Setting $(y-x)=t$ concludes the proof.
\end{proof}
We illustrate the above proof using Fig.~\ref{fig:minmaxdelayproof}. Assume $\overline{D}=\underline{D}=0$, then $R^*(t)=R(t)$, 
meaning every bit which enters a queue directly leaves. Now assume $\overline{D}>0$, thus, all data which entered $\overline{D}$ time units before could leave as well:
\begin{equation}
    R^*(y)-R^*(x) \leq R(y)-R(x-\overline{D})
\end{equation}
This has already been proven in~\cite[Theorem~2.1]{cruz_calculus_1991}.
With $\underline{D}>0$, data that entered after $y-\underline{D}$ cannot leave by definition of  $\underline{D}$. As a result, it is subtracted from our observation interval, which leads to~\eqref{equ:proof1}. Equation~\eqref{equ:proof2} and~\eqref{equ:proof3} follow from the definition of arrival curves.
\begin{figure}[t]
		\centering\includegraphics[width=0.75\linewidth]{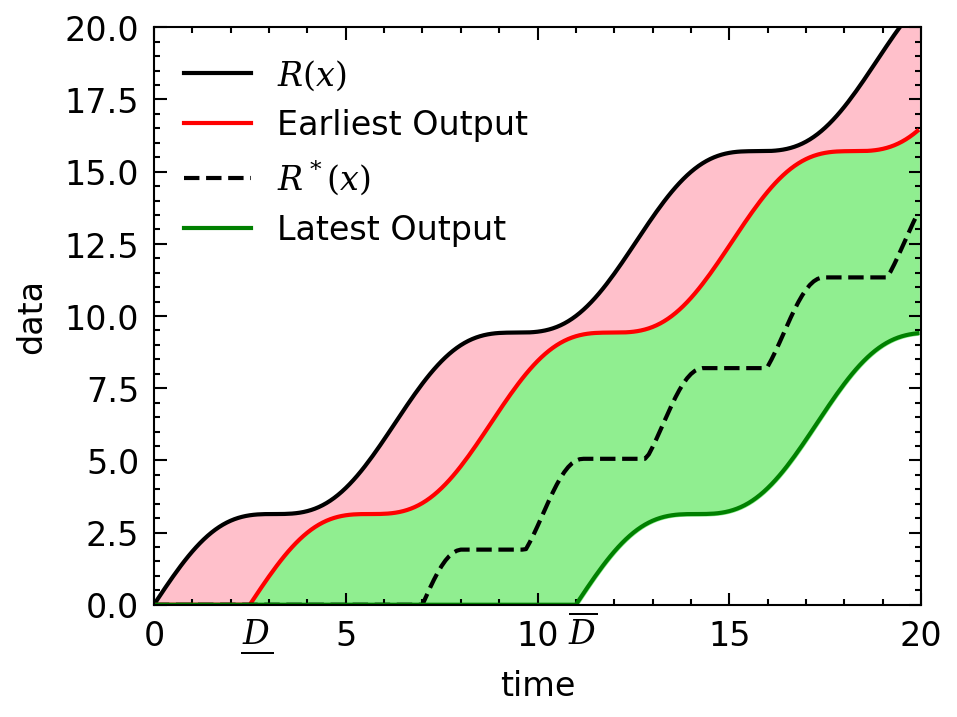}
		\caption{Output with known minimum and maximum delay. $R^*(x)$ is upper and lower bounded within the green area.}\label{fig:minmaxdelayproof}
\end{figure}
\begin{theorem}[Delay Bound {\cite[Theorem~1.4.2]{boudec_network_2012}}]
\label{theo:delay_bound}
The worst-case delay $D$ for arrival curve $\alpha$ in a queue with service curve $\beta$ is bounded by the maximum horizontal distance: $D \le h(\alpha, \beta)$
\end{theorem}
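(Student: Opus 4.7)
The plan is to bound the virtual delay experienced by any bit directly from the two defining inequalities: the arrival constraint $R(t)-R(s)\le\alpha(t-s)$ and the service guarantee $R^*(t)\ge\inf_{s\le t}\{R(s)+\beta(t-s)\}$. Concretely, I would fix an observation time $t\ge 0$ and show that for every $\tau>h(\alpha,\beta)$ the backlogged data already present at time $t$ has been fully served by time $t+\tau$, i.e.\ $R^*(t+\tau)\ge R(t)$. Since the virtual delay at time $t$ is the smallest $\tau\ge 0$ with $R^*(t+\tau)\ge R(t)$, this immediately gives the desired bound $D\le h(\alpha,\beta)$ after letting $\tau\downarrow h(\alpha,\beta)$.

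The key steps, in order, are as follows. First, unpack the definition of the horizontal distance: for every $\tau>h(\alpha,\beta)$ and every $u\ge 0$ we have $\alpha(u)\le \beta(u+\tau)$, since otherwise the horizontal distance at abscissa $u$ would exceed $h(\alpha,\beta)$. Second, instantiate the service curve definition at time $t+\tau$, obtaining $R^*(t+\tau)\ge \inf_{s\le t+\tau}\{R(s)+\beta(t+\tau-s)\}$, and show that each term in the infimum is at least $R(t)$ by splitting on the location of $s$. For $s\le t$, Definition~\ref{def:arrival} together with the inequality from step one yields $R(t)-R(s)\le \alpha(t-s)\le \beta(t-s+\tau)$, so $R(s)+\beta(t+\tau-s)\ge R(t)$. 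For $t\le s\le t+\tau$, monotonicity of the cumulative function $R$ gives $R(s)\ge R(t)$, and non-negativity of $\beta$ yields $R(s)+\beta(t+\tau-s)\ge R(t)$. Third, combine both cases to conclude $R^*(t+\tau)\ge R(t)$, and finally take $\tau\downarrow h(\alpha,\beta)$ and the supremum over $t$ to translate the bound on the virtual delay into the per-hop delay bound.

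I do not expect any real obstacle here, since this is the textbook argument of Le Boudec and Thiran; the proof is purely algebraic manipulation of the two definitions. The only subtlety worth flagging is that $h(\alpha,\beta)$ is defined as a supremum (over abscissae $s$) of infima (over horizontal shifts $\tau$), so the clean inequality $\alpha(u)\le \beta(u+\tau)$ need only hold for $\tau$ strictly greater than $h(\alpha,\beta)$; passing to the limit $\tau\downarrow h(\alpha,\beta)$ at the end then recovers the stated bound provided one notes that $\beta$ is left- or right-continuous (as is standard for wide-sense increasing service curves in network calculus). This limiting step, together with care in including the endpoint $s=t$ in both cases of the split, is the only point that requires attention.
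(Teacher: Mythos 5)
The paper offers no proof of this statement; it is imported verbatim as Theorem~1.4.2 of Le Boudec and Thiran, so there is nothing internal to compare against. Your reconstruction is the standard textbook argument and is correct: the case split on $s\le t$ versus $t\le s\le t+\tau$ inside the service-curve infimum, combined with $\alpha(u)\le\beta(u+\tau)$ for $\tau>h(\alpha,\beta)$, gives $R^*(t+\tau)\ge R(t)$, and since the virtual delay is itself an infimum, $d(t)\le\tau$ for all $\tau>h(\alpha,\beta)$ already yields $d(t)\le h(\alpha,\beta)$ without any appeal to continuity of $\beta$ --- the one caveat you flag is unnecessary.
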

%[BT12, pp. 67-68]]
\begin{theorem}[Aggregate Delay Bound {\cite[Section~2.1.1]{boudec_network_2012}}]
\label{theo:aggregate_bound}
The worst-case delay bounds are the same for all aggregated flows and are computed with the aggregated arrival curve. 
\end{theorem}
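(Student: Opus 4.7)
The plan is to lift the single-curve delay bound (Theorem~\ref{theo:delay_bound}) to a per-flow statement by exploiting the FIFO (First-In-First-Out) discipline that the paper assumes for every \ac{CBS} queue. The core intuition is that under FIFO, the delay experienced by any bit of any individual flow is exactly the virtual delay of the aggregate cumulative arrival at the moment that bit enters; hence any bound that applies to the aggregate delay applies uniformly to each constituent flow.

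First, I would aggregate the cumulative inputs: for flows $f_1,\dots,f_k$ sharing the queue, define $R(t)=\sum_i R_i(t)$ and $R^*(t)=\sum_i R_i^*(t)$. By Theorem~\ref{theo:flow_aggregate}, the aggregate admits $\alpha(t)=\sum_i \alpha_i(t)$ as arrival curve, while the service curve $\beta$ of the queue governs exactly this aggregate by Definition of the service curve. Applying Theorem~\ref{theo:delay_bound} then yields the aggregate delay bound $D \le h(\alpha,\beta)$. Second, I would translate this into a per-flow statement. Fix any bit belonging to flow $f_i$ that enters the queue at some time $t$. Under FIFO, this bit is served no later than the time at which $R^*$ catches up with $R(t)$, and Theorem~\ref{theo:delay_bound} guarantees that this happens no later than $t+h(\alpha,\beta)$. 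Since the index $i$ played no role in this argument, every flow in the aggregate inherits the identical bound $h(\alpha,\beta)$, proving both claims of the theorem.

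The main obstacle is the careful bookkeeping around the FIFO assumption. The ``delay'' quantity bounded in Theorem~\ref{theo:delay_bound} is the virtual delay of the aggregate cumulative function, and it coincides with the per-bit delay only when the scheduler never overtakes across flows; under a pathological non-FIFO scheduler, one flow could be delayed arbitrarily while the other is served at full rate, so the bound would not transfer. For a single \ac{CBS} queue this holds by construction (the queue is FIFO), so the step is legitimate, but the proof must invoke this explicitly rather than treating it as implicit in the service-curve abstraction. A secondary subtlety is that $\alpha_f$ from Theorem~\ref{theo:arrival} is a staircase function whose jumps correspond to whole packets; one should verify that the horizontal-distance bound on the aggregate staircase is attained at a packet boundary and therefore coincides with the physical per-packet delay used when enforcing $D^{u,v,p}\le\overline{D}^{u,v,p}$ in the admission test.
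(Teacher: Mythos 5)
The paper does not prove this statement at all: it is imported verbatim as a cited result from Le Boudec and Thiran (Section~2.1.1), so there is no in-paper proof to compare against. Your argument is correct and is essentially the standard one underlying that citation: bound the virtual delay of the aggregate cumulative functions via Theorem~\ref{theo:delay_bound} with $\alpha=\sum_i\alpha_i$ (justified by Theorem~\ref{theo:flow_aggregate}) and the queue's service curve, then transfer the virtual-delay bound to each bit of each constituent flow using the FIFO discipline of the queue. Your emphasis on FIFO is exactly the right point --- this is the hypothesis that makes the per-flow transfer legitimate, and the paper only states it in passing (``Packets in the same queue are served in a First-In-First-Out manner''); without it, aggregate service curves do not yield per-flow delay bounds, as you note. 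The secondary remark about the staircase arrival curve and packet boundaries is a reasonable precaution but is not needed for the theorem as stated: the horizontal-distance bound holds for all bits regardless of where the supremum is attained, so the last bit of any packet (and hence the packet itself under store-and-forward) is covered. In short, your proof is sound and supplies the justification the paper delegates to its reference.
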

\begin{theorem}[Service Curve{~\cite[p.~64]{queck}}]
\label{theo:service_curve}
The service curve for \ac{CBS} queues is a rate-latency curve with the following form:
\begin{equation}\label{equ:cbsservice}
	\beta_{R,T}(t) = R \cdot \big[t-T\big]^+  
\end{equation}
For queue $(u,v,p)$ the rate $R$ and latency $T$ are defined as:
\begin{equation}
	\begin{split}
		R^{u,v,p}  = idSl^{u,v,p},\qquad
		T^{u,v,p}  = \frac{c_{max}^{u,v,p}}{idSl^{u,v,p}}
	\end{split}
\end{equation}
\end{theorem}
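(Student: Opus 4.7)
The plan is to derive the rate-latency form directly from the credit dynamics of CBS, following the standard network calculus technique of bounding the accumulated output during a backlogged period.

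First, I would fix an arbitrary backlogged interval $[s,t]$ at queue $(u,v,p)$. During this interval, the queue alternates between transmitting (credit decreases at rate $idSl-C$) and waiting (credit increases at rate $idSl$). Letting $\tau_{tx}$ denote the cumulative transmission time within $[s,t]$, the output is $S(s,t)=C\cdot\tau_{tx}$, and the credit evolution yields $c(t)-c(s)=idSl\cdot(t-s)-C\cdot\tau_{tx}$. Combining the two gives the key identity
\[
S(s,t) = idSl\cdot(t-s) + c(s) - c(t).
\]

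Second, I would bound the credit terms. By definition $c(t)\le c_{max}$ for every $t$. At the start of a backlogged period, the reset rule from Phase~(e) of Section~\ref{sec:fundamentals} ensures $c(s)\ge 0$ (an empty queue with positive credit is reset to zero), so taking $c(s)=0$ is the worst case. Substituting yields $S(s,t)\ge idSl\cdot(t-s) - c_{max} = idSl\cdot(t - s - c_{max}/idSl)$, which together with the trivial bound $S(s,t)\ge 0$ matches $\beta_{R,T}(t-s)$ with $R=idSl$ and $T=c_{max}/idSl$.

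Third, I would lift this bound from backlogged periods to arbitrary $t$ with the usual argument: if $t$ lies in a backlogged period starting at $s$, the inequality applies directly and $R(s)=R^*(s)$; otherwise, choose $s=t$. In both cases $R^*(t)\ge R(s)+\beta_{R,T}(t-s)$, and taking the infimum over $s\le t$ establishes the service curve property.

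The main obstacle will be handling the subtle case where a backlogged period begins with strictly negative credit, which can occur if a packet arrives at an empty queue whose credit has not yet recovered to zero. The cleanest resolution is to redefine the busy period to start at the last moment the credit reaches zero, before which no service of this queue was possible in any case, so that $c(s)=0$ holds without loss of generality. Verifying that this redefinition is compatible with the service curve definition, and that $c_{max}^{u,v,p}$ is derived consistently under non-preemptive blocking by lower-priority frames and contention with higher-priority CBS queues, is the technical core of the argument and is exactly what the derivation in~\cite{queck} establishes.
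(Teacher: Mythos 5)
First, note that the paper does not actually prove Theorem~\ref{theo:service_curve}: it is imported verbatim from~\cite[p.~64]{queck}, so your derivation can only be compared against the standard literature argument rather than against anything in this text. Your skeleton is that standard credit-accounting argument, and its first two steps are sound: the identity $S(s,t)=idSl\cdot(t-s)+c(s)-c(t)$ holds on any interval during which the queue is never simultaneously empty and non-negative in credit (credit slope $idSl$ when not transmitting, $idSl-C$ when transmitting), and combining it with $c(t)\le c_{max}$ and $c(s)=0$ gives exactly $\beta_{R,T}$ with $R=idSl$ and $T=c_{max}/idSl$.

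The gap is in your resolution of the negative-initial-credit case, and your proposed fix points in the wrong direction. If the backlogged period begins at $s_0$ with $c(s_0)<0$ and you move the anchor \emph{forward} to the first instant $s'>s_0$ at which the credit recovers to zero, you lose the equality $R(s')=R^*(s')$: the frame that opened the backlog (and anything arriving in $(s_0,s']$) is already counted in $R(s')$ but not yet in $R^*(s')$, so a lower bound on the output over $[s',t]$ no longer yields $R^*(t)\ge R(s')+\beta(t-s')$, which is what the service-curve definition requires. Anchoring at $s_0$ itself only gives $c(s_0)\ge c_{min}$ and hence the weaker latency $(c_{max}-c_{min})/idSl$, not the claimed $c_{max}/idSl$. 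The correct repair moves the anchor \emph{backward}: take $s_1\le s_0$ to be the last instant at which the queue is empty with non-negative (hence, by the reset rule, zero) credit, or $s_1=0$ if no such instant exists. There $R(s_1)=R^*(s_1)$ and $c(s_1)=0$, and on all of $(s_1,t]$ the queue is at every instant either non-empty or carrying negative credit, so your credit-slope identity still applies over the whole of $[s_1,t]$; this recovers $T=c_{max}/idSl$. As written, your argument only establishes the larger latency term, and the step you defer to~\cite{queck} as the "technical core" is precisely the one that must be made explicit for the theorem to hold in the stated form.
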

\begin{theorem}[Credit Bounds{~\cite[Theorem~2~and~3]{zhao_latency_2020}}]
\label{theo:credit_bounds}
The maximum and minimum \ac{CBS} credits are defined as:
\begin{equation}\label{equ:credit}
	\begin{split}
		& c_{max}^{u,v,p} = idSl^{u,v,p} \cdot \frac{\sum_{i=p+1}^{Q}c_{min}^{u,v,i} - L^{u,v,<p}}{\sum_{i=p+1}^{Q}idSl^{u,v,i}-C} \\
		~ \text{and } ~ & c_{min}^{u,v,p} = sdSl^{u,v,p} \cdot \frac{L^{u,v,p}}{C}
	\end{split}
\end{equation}
\end{theorem}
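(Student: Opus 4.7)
My plan is to prove the two bounds separately using credit-balance arguments over carefully chosen worst-case intervals, exploiting the linear credit dynamics of CBS.

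For $c_{min}^{u,v,p}$, the key observation is that the credit of queue $(u,v,p)$ decreases only during its own transmissions, at the constant rate $sdSl^{u,v,p} < 0$, and is non-decreasing otherwise. Since a transmission can begin only when the credit is non-negative, the most negative post-transmission value is realised by starting with credit exactly $0$ and sending a frame of the maximum admissible size $L^{u,v,p}$. The net change over the transmission time $L^{u,v,p}/C$ is $sdSl^{u,v,p}\cdot L^{u,v,p}/C$, which is the claimed expression.

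For $c_{max}^{u,v,p}$, I would use that the credit grows at rate $idSl^{u,v,p}$ exactly while the queue is backlogged but blocked from transmitting, and is clamped to $0$ whenever the queue drains. Hence $c_{max}^{u,v,p} = idSl^{u,v,p}\cdot T^{*}$, where $T^{*}$ is the longest interval during which queue $(u,v,p)$ is continuously held back with a pending frame. The worst case I would construct is: at $t=0$ a maximum-size lower-priority frame of length $L^{u,v,<p}$ has just begun transmission and every higher-priority queue $i>p$ is backlogged with credit exactly $0$; after the lower-priority frame completes at $L^{u,v,<p}/C$, the higher-priority queues transmit back-to-back until each of their credits bottoms out at $c_{min}^{u,v,i}$; only then, at $t=T^{*}$, does queue $p$ become eligible. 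Letting $y_i$ denote the total transmit time of queue $i$ on $[0,T^{*}]$, the facts that $p$ is silent and that the link is never idle force $\sum_{i>p} y_i = T^{*} - L^{u,v,<p}/C$, while a per-queue credit balance gives $c_{min}^{u,v,i} = idSl^{u,v,i}(T^{*}-y_i) + sdSl^{u,v,i}\,y_i = idSl^{u,v,i}\,T^{*} - C\,y_i$. Summing over $i>p$ and substituting the time identity produces a single linear equation in $T^{*}$ whose solution, after rearrangement, is exactly the ratio appearing in the theorem, and multiplying by $idSl^{u,v,p}$ completes the derivation.

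The main obstacle is the extremality argument: showing that no alternative initial state or scheduling pattern yields a larger $T^{*}$. I would split this into two checks. First, starting each higher-priority queue at credit $0$ beats any smaller initial value, since any deficit must first be paid back at rate $idSl^{u,v,i}$ before that queue can interfere, and using $sdSl^{u,v,i} - idSl^{u,v,i} = -C$ one verifies algebraically that the lost interference time is never recovered. Second, scheduling the higher-priority transmissions contiguously after the lower-priority frame is worst for queue $p$, which follows from a local swap argument on the credit trajectory: any idle gap preceding an interfering transmission can be pushed past it without changing final credits and without shortening $T^{*}$. The recursive dependence on $c_{min}^{u,v,i}$ for $i>p$ is harmless because $c_{min}$ depends only on link-local parameters, so the two bounds can be resolved top-down from the highest priority.
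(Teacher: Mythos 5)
The paper does not prove this theorem itself; it is imported verbatim from Zhao et al.\ (Theorems~2 and~3 of the cited reference), so there is no in-paper proof to compare against. Your reconstruction is essentially the standard argument from that literature and it does recover the stated formulas: the $c_{min}$ bound is immediate from the eligibility rule plus linear credit decrease, and your busy-interval balance $c_{min}^{u,v,i} = idSl^{u,v,i}\,T^{*} - C\,y_i$ combined with $\sum_{i>p} y_i = T^{*} - L^{u,v,<p}/C$ solves to exactly the ratio in Eq.~\eqref{equ:credit}, with $c_{max}^{u,v,p} = idSl^{u,v,p}\,T^{*}$.

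Two points deserve tightening. First, in your extremality check you argue that initial credit $0$ for the interfering queues beats any \emph{smaller} value, but you should also rule out \emph{positive} initial credit (which, by your own balance equation, would increase $y_i$ and hence $T^{*}$): this follows from the CBS selection and reset rules --- a lower-priority frame can only begin transmission when no higher-priority queue is eligible, so at $t=0$ every queue $i>p$ is either at non-positive credit or empty with credit reset to $0$; similarly, the absence of further lower-priority interference and of link idling during $(0,T^{*})$ follows because queue $p$ itself remains eligible (credit $\ge 0$) throughout. Second, the derivation implicitly requires $\sum_{i=p+1}^{Q} idSl^{u,v,i} < C$ so that the denominator is negative and $T^{*}$ is finite and positive; this admission condition should be stated, since otherwise the blocking interval need not terminate. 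With those additions the argument is complete and consistent with the cited source.
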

\begin{theorem}[Link Shaping Curve {\cite[Section~7.1]{kellerer_network_2016}}] 
\label{theo:link_shaper_curve}
 The link shaping curve in store-and-forward networks is:
 \begin{equation}
	\sigma^{u,v,p}_l(t) = L^{u,v,p} + Ct
\end{equation}
\end{theorem}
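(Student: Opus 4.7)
The plan is to establish that, in a store-and-forward network, the data that a physical link of capacity $C$ can deliver to a downstream queue in any window of length $t$ is bounded by $\sigma^{u,v,p}_l(t) = L^{u,v,p} + Ct$. I would argue directly from Definition~\ref{def:shaper}, picking arbitrary $s \le t$ and bounding $R^*(t) - R^*(s)$, where $R^*$ is the cumulative traffic observed as arrivals at the downstream queue on link $(u,v)$ for priority $p$.

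First, I would handle the "continuous transmission" contribution: the physical link operates at line rate $C$, so over the window $(s,t]$, at most $C(t-s)$ bits can be transmitted. In a cut-through model, this would already give $\sigma_l(t) = Ct$. Under store-and-forward, however, the downstream queue treats a packet as having arrived only at the instant its last bit is received. Thus, a packet whose transmission began before time $s$ but completes during $(s,t]$ can contribute its full size to $R^*(t) - R^*(s)$ as an essentially instantaneous burst, even though only a fraction of its bits was actually transmitted in the observation window. Because packets are serialized on the link and only one such packet can be "in flight" across the boundary $s$, this term is bounded by $L^{u,v,p}$. Combining the in-flight packet with subsequent packets served at rate at most $C$ during $(s,t]$ yields $R^*(t) - R^*(s) \le L^{u,v,p} + C(t-s)$, which, by Definition~\ref{def:shaper} with $s$ replaced by $t-s$, matches the claimed shaping curve.

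The main obstacle is the careful accounting around the boundary $s$: I must ensure that the argument charges at most one maximum-size packet to the window, i.e., that no second "store-and-forward burst" can be double-counted against the continuous rate $C$. This hinges on the fact that once a packet completes and its bits are accounted for as a burst, any subsequent packet must be fully transmitted within the window to appear in $R^*$, hence is already covered by the $C(t-s)$ term. A short argument that the bound is tight (the worst case is a packet completing just after $s$, followed by back-to-back packets filling the window at rate $C$) would close the proof and justify that the constant $L^{u,v,p}$ cannot be reduced.
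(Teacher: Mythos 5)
Your argument is correct, and it is the standard packetizer/serialization argument for store-and-forward links: over any window $(s,t]$ the downstream node can register at most one packet that straddled the boundary $s$ (counted in full at its completion instant, hence the constant $L^{u,v,p}$) plus whatever completes entirely inside the window, which is limited by the line rate to $C(t-s)$. Your explicit handling of the double-counting worry --- that any packet other than the single in-flight one must be transmitted wholly within the window and is therefore absorbed by the $C(t-s)$ term --- is exactly the point that makes the bound valid. Note, however, that the paper itself does not prove this theorem; it imports the result by citation (Section~7.1 of the referenced network-calculus text, also Eq.~(21) of the Zhao et al.\ paper), so there is no in-paper proof to compare against --- your derivation is a self-contained reconstruction of what the cited source establishes. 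One small point worth tightening: since $\sigma^{u,v,p}_l$ is used in Eq.~\eqref{equ:arrival_link_shaped} to bound only the priority-$p$ aggregate arriving from link $l$, you should state that the straddling packet is charged at $L^{u,v,p}$ only when it belongs to that aggregate (otherwise it contributes nothing to $R^*$), which is why the constant is the per-queue maximum frame size rather than the link-wide $L_{max}$. Your closing tightness remark is a nice bonus but not required for the statement.
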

%For cut-through networks, $L^{u,v,p}$ can be decreased to the size of the header required before forwarding the packet.

\begin{theorem}[CBS Shaping Curve {\cite[Theorem~5~and~9]{framework}}] 
\label{theo:cbs_shaper_curve}
As $idSl^{u,v,p}$ also denotes the maximum sustained output rate, each \ac{CBS} queue $(u,v,p)$ offers a shaping curve of:
\begin {equation}\label{equ:shaperCBS}
\sigma^{u,v,p}_{CBS}(t) = idSl^{u,v,p} \cdot \bigg(t+\frac{c_{max}^{u,v,p}-c_{min}^{u,v,p}}{idSl^{u,v,p}}\bigg) + L^{u,v,p}
\end {equation} 
\end{theorem}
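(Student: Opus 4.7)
The plan is to bound the cumulative output $R^*(t) - R^*(s)$ of a CBS queue over any interval $[s,t]$ by exploiting the dynamics of the credit counter $c$, and then to identify the resulting bound with $\sigma^{u,v,p}_{CBS}$ from Eq.~\eqref{equ:shaperCBS}. The intuition is that the credit mechanism enforces a long-term output rate of $idSl^{u,v,p}$, while the transient burst on top of this rate is limited by the credit range $c_{max}^{u,v,p} - c_{min}^{u,v,p}$ plus one extra maximum-size frame $L^{u,v,p}$ contributed by packet atomicity.

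First I would partition $[s,t]$ into three regimes according to the queue state: $T_{tx}$ is the time spent transmitting a frame of this queue, during which the credit decreases at rate $sdSl^{u,v,p} = idSl^{u,v,p} - C$; $T_{inc}$ is the time during which credit is increasing at rate $idSl^{u,v,p}$ (queue backlogged but yielding the channel, or empty with strictly negative credit); and $T_{reset}$ is the time during which credit is pinned at $0$ (queue empty with non-negative credit, which is reset per Phase~(e) in Fig.~\ref{fig:credit}). Integrating the credit equation over $[s,t]$ yields
\begin{equation*}
c(t) - c(s) = idSl^{u,v,p} \cdot T_{inc} + (idSl^{u,v,p} - C) \cdot T_{tx},
\end{equation*}
while the emitted bits satisfy $W(s,t) = C \cdot T_{tx}$. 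Eliminating $T_{inc}$ via $T_{inc} = (t-s) - T_{tx} - T_{reset}$ and solving for $W(s,t)$ gives
\begin{equation*}
W(s,t) = idSl^{u,v,p} \cdot (t-s) + \bigl(c(s) - c(t)\bigr) - idSl^{u,v,p} \cdot T_{reset}.
\end{equation*}
Invoking Theorem~\ref{theo:credit_bounds} to bound $c(s) \le c_{max}^{u,v,p}$ and $c(t) \ge c_{min}^{u,v,p}$, and dropping the non-negative $idSl^{u,v,p} \cdot T_{reset}$ term, produces the fluid envelope $W(s,t) \le idSl^{u,v,p} \cdot (t-s) + c_{max}^{u,v,p} - c_{min}^{u,v,p}$.

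The remaining step is to lift this fluid bound to one that respects packet atomicity: a frame started just before $s$ or straddling $t$ can contribute up to $L^{u,v,p}$ extra bits that continuous credit accounting does not fully attribute, yielding the additional $+L^{u,v,p}$ constant. Setting $\tau = t-s$, the right-hand side matches $\sigma^{u,v,p}_{CBS}(\tau)$ in Eq.~\eqref{equ:shaperCBS}, and by Definition~\ref{def:shaper} this establishes $\sigma^{u,v,p}_{CBS}$ as a shaping curve for the queue. The main obstacle is making this atomicity correction airtight: one must argue that exactly one maximum-size frame, and no more, can be missed by the fluid accounting across all transitions between the backlogged, idle, and reset regimes. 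This is where I would rely on the frame-boundary case analysis of Zhao et al.~\cite{zhao_latency_2020} and the framework paper~\cite{framework}.
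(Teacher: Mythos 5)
The paper does not actually prove this statement: Theorem~\ref{theo:cbs_shaper_curve} is imported verbatim from \cite[Theorems~5 and~9]{framework}, so there is no in-paper proof to compare against. Judged on its own, your credit-accounting derivation is the standard route to this result and is essentially sound: partitioning $[s,t]$ by credit regime, integrating the credit dynamics, and eliminating $T_{inc}$ does yield $W(s,t) \le idSl^{u,v,p}(t-s) + c(s) - c(t)$, and Theorem~\ref{theo:credit_bounds} turns this into the fluid envelope $idSl^{u,v,p}(t-s) + c_{max}^{u,v,p} - c_{min}^{u,v,p}$. Two points need tightening. First, your credit balance identity omits the downward jump discontinuities at the reset instants of Phase~(e): when a positive credit is zeroed on the queue emptying, $c$ drops by a finite amount that a piecewise-rate integral does not capture. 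The correct relation is $W(s,t) = idSl^{u,v,p}(t-s) + (c(s)-c(t)) - idSl^{u,v,p}\,T_{reset} - \sum_j \Delta_j$ with reset jumps $\Delta_j \ge 0$; since these enter with a minus sign the upper bound survives, but the equality as written is false and should be demoted to an inequality. Second, the $+L^{u,v,p}$ term, which you correctly flag as the crux and then defer to external case analysis, does not require that case analysis at all: it is exactly the standard packetizer bound. If $R^*_{fl}$ denotes the fluid (bit-by-bit) output and $R^*$ the packetized output seen by the next store-and-forward hop, then $R^*_{fl}(t) - L^{u,v,p} \le R^*(t) \le R^*_{fl}(t)$, hence $R^*(t) - R^*(s) \le R^*_{fl}(t) - R^*_{fl}(s) + L^{u,v,p}$, and applying the fluid envelope gives Eq.~\eqref{equ:shaperCBS} via Definition~\ref{def:shaper}. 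With those two repairs your argument is complete and self-contained, which is more than the paper itself provides for this theorem.
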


For Theorem~\ref{theo:link_shaper_curve} and~\ref{theo:cbs_shaper_curve}, see also~\cite[Eq.~(21)]{zhao_quantitative_2022}.
To sum up, we showed that we can use the local maximum and the minimum delay budget ($\overline{D}$ and $\underline{D}$) to compute the arrival curves at each hop. 
As the local delay budgets do not change for new reservations, the arrival curves for existing reservations at each node do not change as well; we only have to validate new reservations. In other words, nodes do not need to be informed about new reservations of flows in other parts of the network. The local delay budget is validated using the current worst-case delay, as presented in Eq.~\eqref{equ:per-hop-delay}.
The proof for the current worst-case delay can be derived as follows:
\begin{proof}[Proof of Eq.~\eqref{equ:per-hop-delay}]
Using Eq.~\eqref{equ:max-min-delay} and Theorem~\ref{theo:output_delay}, we can show that
\begin{equation}
    \alpha_f^{u,v,p}(t) = \alpha_f(t+\Delta d^{u,v,p}_f).
\end{equation}
With Theorem~\ref{theo:flow_aggregate} about the flow aggregate, this results in Eq.~\eqref{equ:queue-arrival}.
By Def.~\ref{def:shaper}, we know that outputs can be upper bounded with shaping curves, resulting in $\min(\alpha^*(t), \sigma(t))$. Thereby, the output of a CBS queue has Theorem~\ref{theo:cbs_shaper_curve} as shaping curve, while traversing a link adds Theorem~\ref{theo:link_shaper_curve} as shaping curve. With Definition~\ref{def:arrival}, we can prove Eq.~\eqref{equ:arrival_link_shaped} and, correspondingly, Eq.~\eqref{equ:improved-queue-arrival}. Finally, Theorem~\ref{theo:aggregate_bound} and~\ref{theo:delay_bound} result in the per-hop delay bound of Eq.~\eqref{equ:per-hop-delay}, which we can evaluate using Theorem~\ref{theo:service_curve}. 
\end{proof}

\subsection{Protocol Adaption}
\label{sec:protocol-adaption}
For the integration of \ac{SSRP} into the established layer~2
reservation, all variables that are required to calculate $D^{u,v,p}$ in Eq.\,\eqref{equ:per-hop-delay} - that are not locally
available - must be obtained through the reservation protocol.
%This section covers the integration of the latency bounds from Sec.\,\ref{sec:ourapproach} 
%Generally, 
The service curve~$\beta$ is only based on locally available information (from Eq.\,\ref{equ:cbsservice}), such as the configured idleSlopes for each priority.
For the arrival curve~$\alpha$, the individual arrival curves~$\alpha_f$ of each flow~$f$ are required (Eq.\,\ref{equ:arrival_link_shaped}).
These can be calculated based on the \acp{TSpec} from \ac{SRP} and \ac{RAP}, as shown in Eq.\,\eqref{equ:per-flow-arrival}.
In addition to that, the term $\Delta d_f$ can be obtained by using the accumulated latency fields in both \ac{SRP} and \ac{RAP}.
Note that SRP does not include an accumulated minimum latency, but 0 can always be used as a worst-case estimate.
In addition, SRP must be amended to validate that the pre-configured local guarantees are never exceeded.
%In order to provide safe guarantees, however, SRP must be amended to ensure that the pre-configured per-hop guarantees are never exceeded.
Otherwise, the accumulated maximum latency is not safe to use.

When applying the improved formula in Eq.\,\eqref{equ:improved-queue-arrival},
the shaping curve~$\sigma_{CBS}^q$ of the previous queue~$q$ must be known.
As shown in Eq.\,\eqref{equ:credit} and~\eqref{equ:shaperCBS},
this curve relies on the idleSlope of the previous shaper and the maximum and minimum
frame sizes of that previous queue. Depending on the network, the applied priority-to-queue mapping might also be
required in order to compute~$\sigma_{CBS}^q$.
This type of information about other switches is not yet available in neither SRP nor RAP.
Currently, RAP contributors consider sharing shaper-specific information with neighbors a priori
in order to support heterogeneous network configurations.
Hence, this type of information could also be used to implement the improved latency formula.
%Note that the previous switch does not necessarily provide the same number of traffic classes
%as the current device. 
%for the improved arrival curve~$\alpha$.

\section{Evaluation}
\label{sec:evaluation}

In this section, we use simulation to show that the state-of-the-art approaches cannot provide an upper bound for all scenarios with a counterexample. We also demonstrate that our method benefits from additional information about the CBS shaping of neighboring nodes and how one could apply it to a real-world network.
To offer precise evaluation, we assumed a priori knowledge of the arriving flows, reflecting an offline configuration before the network setup. See \cite{grigorjew_ML_2020} for per-hop delay bounds in online configurations during the runtime of the network.

\subsection{Comparison of Existing Approaches}
\label{sec:eval_comparison}
First, we compare all delay bounds with a simulation. We implemented the network in Fig. \ref{fig:eval_comparison_network} in OMNeT++ with INET~4.4. All talkers send packets with a \ac{CMI} of \SI{125}{\micro\second} to the same listener. We varied the number of talkers and, consequently, the number of input links on the last switch. We adjust the packet size to keep the utilization of the link between the last switch and the listener at 75\% (= idleSlope) for traffic with priority~7. Additional sources inject best-effort packets into the outgoing queues just before a priority~7 packet arrives, like in Section~\ref{sec:infinite}. This happens at five stages on all input links, which leads to a burst of packets at the last switch.

Fig.~\ref{fig:eval_comparison_results} shows the resulting maximum queuing delay in the last switch. Since the measured maximum queuing delays of the simulation exceed the upper bounds of the standards at one or more points, we can show that they do not provide guaranteed delay bounds. On the other hand, \ac{SSRP} results in a boundary that is always higher than the measured delay. Note that the minimum Ethernet packet size limits the maximum number of input links to 13, thus, the bounds of \ac{SSRP} will not decrease further in Fig.~\ref{fig:eval_comparison_results}.

\begin{figure}[t]
		\centering\includegraphics[width=0.9\linewidth]{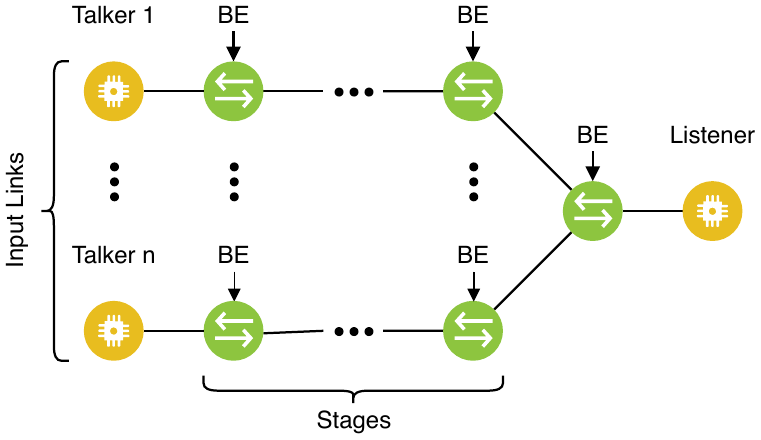}
		\caption{Network used to compare the different approaches.}\label{fig:eval_comparison_network}
\end{figure}

\subsection{Credit-Based Shaping}

As introduced in Eq.~\eqref{equ:improved-queue-arrival}, \ac{SSRP} can use information about the \ac{CBS} shaping of neighboring nodes to improve the latency bound. This evaluation demonstrates that this knowledge improves the number of successful reservations. We used a line topology with six switches, a talker connected to each, and a listener attached to the sixth switch. We then added flows from a randomly selected talker to the same listener. This test adds new flows until the sum of all maximum delays exceeds a maximum end-to-end delay. We used \SI[per-mode=symbol]{1}{\giga\bit\per\second} links with an idleSlope of 75\%, 128~Byte packets, and \ac{CMI} = \SI{125}{\micro\second}. Therefore, the link rate limits the maximum number of flows to 91. We repeated this test 1000 times. Fig.~\ref{fig:eval_synthetic} shows the mean number of flows with one standard deviation for a given end-to-end delay. With the knowledge about the shaping of neighboring nodes, up to 20\% more reservations are possible.

\begin{figure}[t]
		\centering\includegraphics[width=0.9\linewidth]{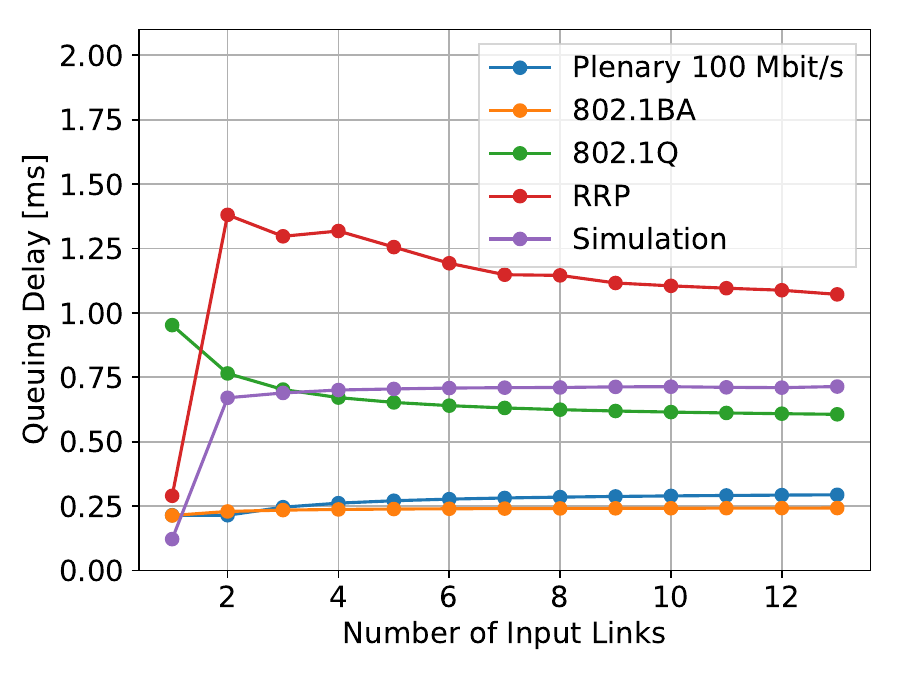}
		\caption{Maximum queuing delay at the last switch.}\label{fig:eval_comparison_results}
\end{figure}

\subsection{Industrial Use-Case}

To evaluate \ac{SSRP} in a practical use case, we simulate an industrial network based on PROFINET. Fig.~\ref{fig:eval_profinet} shows the network topology with \SI[per-mode=symbol]{100}{\mega\bit\per\second} links based on~\cite[Fig.~6-18]{profinet}. Each line consists of three I/O devices with integrated switches, typically representing a machine component. All I/O devices talk to a \ac{PLC} via a central switch that connects to all lines. We want to simulate this network for the maximum number of I/Os. We found this limit by increasing the number of lines until the sum of the maximum delays along the paths exceeded the cycle time of \SI{1}{\milli\second}. In this example, seven lines, i.e., 21 I/Os, each with a \ac{CMI} of \SI{125}{\micro\second} and \ac{MFS} of 110~Byte\footnote{The two additional bytes were necessary to avoid an issue with the OMNeT++ simulation.} is the maximum \cite{profinet}. The load on the link to the PLC is \SI[per-mode=symbol]{18.5}{\mega\bit\per\second} and within the acceptable range \cite{profinet}. We added a \ac{NRT} traffic source and sink, e.g., a surveillance camera and monitor, at the end of the first and last line respectively. As a result, \ac{NRT} packets delay PROFINET traffic, 
%which is therefore not recommended \cite{profinet}, but it is a 
which is a good stress test for \ac{SSRP}. The \ac{NRT} source sends maximum length packets exponentially distributed with a mean of \SI{300}{\micro\second} resulting in 40\% additional load \cite{profinet}.

\begin{figure}[t]
		\centering\includegraphics[width=0.9\linewidth]{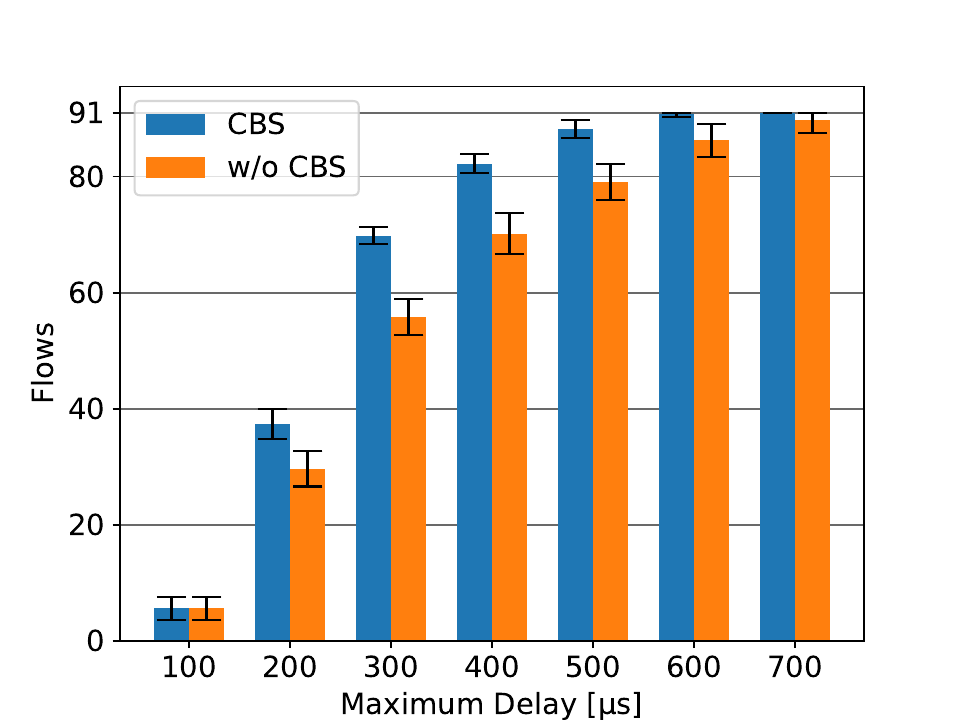}
		\caption{Number of flows for a maximum delay, with and without information about the neighbors' CBS shaping. %Error bars indicate one standard deviation.
  }\label{fig:eval_synthetic}
\end{figure}

Table \ref{tab:eval_profinet} shows the queuing delays on each hop on the first line and the central switch. For the first switches, the delays are very close, since it is likely to encounter a situation where both a PROFINET and \ac{NRT} packet arrive at an unfavorable timing. The considerable difference in the central switch is due to the absence of \ac{NRT} traffic. Redirecting the \ac{NRT} traffic to the PLC increases the delay to \SI{344}{\micro\second}. As we can see, the measured delays in this practical task can be close to the limits computed with \ac{SSRP} but never exceed them.

\textbf{Discussion:} The differences between delay bounds and the measured delay in Table~\ref{tab:eval_profinet} are caused by the necessary assumption in \ac{SSRP} that every switch can possibly use its full delay guarantee. A gap leaves the opportunity for devices to accept additional flow reservations without validating existing reservations. Closing this gap would require significantly more signaling in decentralized systems, as an increase of queuing delay in one device results in cascading latencies over the network, possibly leading to a rejection of already reserved flows. 

\section{Conclusion}
\label{sec:conclusion}
In this paper, we examine the reliability of latency bounds in Time-Sensitive Networking standards for Credit-Based Shaper networks, and showed that the latency bounds of the standards are not reliable due to the potential for infinite burstiness of interfering traffic. Our simulations could also show that the latency bounds are missed in real networks.

To address this issue, we propose a resource reservation process that includes a per-hop latency budget for Credit-Based Shaper networks. Our approach is designed to provide deterministic latency guarantees, using only bridge-local information. %It does not require network-wide information in forwarding devices. 
The traffic specification requirements for resource reservation, including burst, rate, and accumulated minimum and maximum latency, remain consistent with existing protocols. However, we propose a validation of the actual worst-case delay at each hop to ensure that it does not exceed the maximum delay budget. Additionally, we can consider the shaping of previous queues in the reservation. This shaping can improve the delay calculation, resulting in an increase of up to 20\% in the number of flows in our simulations. In a formal proof, we demonstrate that our approach eliminates the need for re-configuration of the network after new reservations and that packets do not exceed their bounds. Our simulations demonstrate the practicality of our approach for industrial networks, as it is characterized by low pessimism and provides reliable guarantees.

\begin{figure}[t]
		\centering\includegraphics[width=0.9\linewidth]{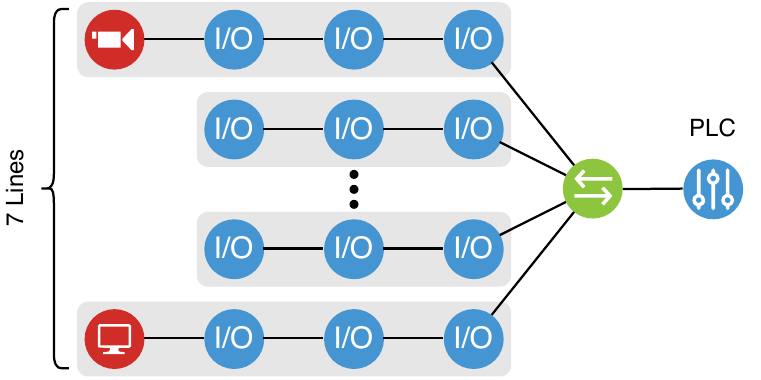}
		\caption{Network topology based on PROFINET \cite{profinet}.}
  \label{fig:eval_profinet}
\end{figure}

\begin{table}[t]
\centering
\caption{PROFINET results}\vspace{-2mm}%\resizebox{\linewidth}{!}{%
\begin{tabular}{|l|l|l|}
\hline
Switch & RRP [\SI{}{\micro\second}] & Simulation [\SI{}{\micro\second}] \\ \hline\hline
1 & 141 & 123 \\
2 & 159 & 123 \\
3 & 177 & 124 \\
Central & 494 & 234 \\
  \hline
\end{tabular}
%}

\label{tab:eval_profinet}
\end{table}

In future research, we aim to investigate centralized and decentralized configurations within the same network, by the incorporation of the central approach presented in~\cite{maile_journal_2022}. Furthermore, we intend to optimize the per-hop delay bounds to maximize the number of successful reservations through the use of heuristics and machine learning techniques. Additionally, we want to investigate the effect of bottlenecks in the network and how they can be considered during reservation.

\begin{acks}
%\footnotesize
This research was partly funded by the Bavarian Ministry
of Economic Affairs, Regional Development and Energy under
grant number DIK0250/02, project KOSINU5.
\end{acks}

%%
%% The next two lines define the bibliography style to be used, and
%% the bibliography file.
\bibliographystyle{ACM-Reference-Format}
\bibliography{sample-sigplan}

%%% -*-BibTeX-*-
%%% Do NOT edit. File created by BibTeX with style
%%% ACM-Reference-Format-Journals [18-Jan-2012].

\begin{thebibliography}{27}

%%% ====================================================================
%%% NOTE TO THE USER: you can override these defaults by providing
%%% customized versions of any of these macros before the \bibliography
%%% command.  Each of them MUST provide its own final punctuation,
%%% except for \shownote{}, \showDOI{}, and \showURL{}.  The latter two
%%% do not use final punctuation, in order to avoid confusing it with
%%% the Web address.
%%%
%%% To suppress output of a particular field, define its macro to expand
%%% to an empty string, or better, \unskip, like this:
%%%
%%% \newcommand{\showDOI}[1]{\unskip}   % LaTeX syntax
%%%
%%% \def \showDOI #1{\unskip}           % plain TeX syntax
%%%
%%% ====================================================================

\ifx \showCODEN    \undefined \def \showCODEN     #1{\unskip}     \fi
\ifx \showDOI      \undefined \def \showDOI       #1{#1}\fi
\ifx \showISBNx    \undefined \def \showISBNx     #1{\unskip}     \fi
\ifx \showISBNxiii \undefined \def \showISBNxiii  #1{\unskip}     \fi
\ifx \showISSN     \undefined \def \showISSN      #1{\unskip}     \fi
\ifx \showLCCN     \undefined \def \showLCCN      #1{\unskip}     \fi
\ifx \shownote     \undefined \def \shownote      #1{#1}          \fi
\ifx \showarticletitle \undefined \def \showarticletitle #1{#1}   \fi
\ifx \showURL      \undefined \def \showURL       {\relax}        \fi
% The following commands are used for tagged output and should be
% invisible to TeX
\providecommand\bibfield[2]{#2}
\providecommand\bibinfo[2]{#2}
\providecommand\natexlab[1]{#1}
\providecommand\showeprint[2][]{arXiv:#2}

\bibitem[Qav(2010)]%
        {Qav}
 \bibinfo{year}{2010}\natexlab{}.
\newblock \showarticletitle{{IEEE} Standard for Local and Metropolitan Area
  Networks - Virtual Bridged Local Area Networks Amendment 12: Forwarding and
  Queuing Enhancements for Time-Sensitive Streams}.
\newblock \bibinfo{journal}{\emph{{IEEE} 802.1Qav-2009 Standard}}
  (\bibinfo{date}{Jan.} \bibinfo{year}{2010}), \bibinfo{pages}{C1--72}.
\newblock
\urldef\tempurl%
\url{https://doi.org/10.1109/IEEESTD.2009.5375704}
\showDOI{\tempurl}


\bibitem[Qat(2010)]%
        {Qat}
 \bibinfo{year}{2010}\natexlab{}.
\newblock \showarticletitle{{IEEE} Standard for Local and metropolitan area
  networks--Virtual Bridged Local Area Networks Amendment 14: {S}tream
  Reservation Protocol ({SRP})}.
\newblock \bibinfo{journal}{\emph{IEEE Std 802.1Qat-2010 (Revision of
  802.1Q-2005)}} (\bibinfo{date}{Sep.} \bibinfo{year}{2010}).
\newblock
\urldef\tempurl%
\url{https://doi.org/10.1109/IEEESTD.2010.5594972}
\showDOI{\tempurl}


\bibitem[Qcr(2018)]%
        {Qcr}
 \bibinfo{year}{2018}\natexlab{}.
\newblock \showarticletitle{Bridges and Bridged Networks Amendment:
  Asynchronous Traffic Shaping}.
\newblock \bibinfo{journal}{\emph{{IEEE} P802.1Qcr Standard}}
  (\bibinfo{date}{Sep.} \bibinfo{year}{2018}).
\newblock


\bibitem[802(2018)]%
        {802Q}
 \bibinfo{year}{2018}\natexlab{}.
\newblock \showarticletitle{IEEE Standard for Local and Metropolitan Area
  Network--Bridges and Bridged Networks}.
\newblock \bibinfo{journal}{\emph{{IEEE Std 802.1Q-2018 (Revision of IEEE Std
  802.1Q-2014)}}} (\bibinfo{year}{2018}), \bibinfo{pages}{1--1993}.
\newblock
\urldef\tempurl%
\url{https://doi.org/10.1109/IEEESTD.2018.8403927}
\showDOI{\tempurl}


\bibitem[Qdd(2019)]%
        {Qdd}
 \bibinfo{year}{2019}\natexlab{}.
\newblock \bibinfo{title}{{P802.1Q}dd – Resource Allocation Protocol}.
\newblock
\newblock
\urldef\tempurl%
\url{https://1.ieee802.org/TSN/802-1qdd/}
\showURL{%
\tempurl}


\bibitem[BA(2021)]%
        {BA}
 \bibinfo{year}{2021}\natexlab{}.
\newblock \showarticletitle{IEEE Standard for Local and Metropolitan Area
  Networks--Audio Video Bridging (AVB) Systems}.
\newblock \bibinfo{journal}{\emph{{IEEE Std 802.1BA-2021 (Revision of IEEE Std
  802.1BA-2011)}}} (\bibinfo{year}{2021}), \bibinfo{pages}{1--45}.
\newblock
\urldef\tempurl%
\url{https://doi.org/10.1109/IEEESTD.2021.9653970}
\showDOI{\tempurl}


\bibitem[Boiger(2010)]%
        {boiger}
\bibfield{author}{\bibinfo{person}{C. Boiger}.}
  \bibinfo{year}{2010}\natexlab{}.
\newblock \bibinfo{booktitle}{\emph{{Class A bridge latency calculations}}}.
\newblock \bibinfo{type}{{T}echnical {R}eport}. \bibinfo{institution}{IEEE 802
  November Plenary Meeting}.
\newblock
\urldef\tempurl%
\url{https://www.ieee802.org/1/files/public/docs2010/ba-boiger-bridge-latency-calculations.pdf}
\showURL{%
\tempurl}


\bibitem[Bouillard et~al\mbox{.}(2018)]%
        {bouillard_deterministic_2018}
\bibfield{author}{\bibinfo{person}{Anne Bouillard}, \bibinfo{person}{Marc
  Boyer}, {and} \bibinfo{person}{Euriell Le~Corronc}.}
  \bibinfo{year}{2018}\natexlab{}.
\newblock \bibinfo{booktitle}{\emph{Deterministic {Network} {Calculus}: {From}
  {Theory} to {Practical} {Implementation}}}.
\newblock \bibinfo{publisher}{John Wiley \& Sons, Inc.},
  \bibinfo{address}{Hoboken, NJ, USA}.
\newblock
\showISBNx{978-1-119-44028-4 978-1-84821-852-9}
\urldef\tempurl%
\url{https://doi.org/10.1002/9781119440284}
\showDOI{\tempurl}


\bibitem[Braden et~al\mbox{.}(1997)]%
        {rfc2205}
\bibfield{author}{\bibinfo{person}{Robert~T. Braden}, \bibinfo{person}{Lixia
  Zhang}, \bibinfo{person}{Steven Berson}, \bibinfo{person}{Shai Herzog}, {and}
  \bibinfo{person}{Sugih Jamin}.} \bibinfo{year}{1997}\natexlab{}.
\newblock \bibinfo{title}{{Resource ReSerVation Protocol (RSVP) -- Version 1
  Functional Specification}}.
\newblock \bibinfo{howpublished}{RFC 2205}.
\newblock
\urldef\tempurl%
\url{https://doi.org/10.17487/RFC2205}
\showDOI{\tempurl}


\bibitem[Cruz(1991)]%
        {cruz_calculus_1991}
\bibfield{author}{\bibinfo{person}{R.~L. Cruz}.}
  \bibinfo{year}{1991}\natexlab{}.
\newblock \showarticletitle{A calculus for network delay. {I}. {Network}
  elements in isolation}.
\newblock \bibinfo{journal}{\emph{IEEE Transactions on Information Theory}}
  \bibinfo{volume}{37}, \bibinfo{number}{1} (\bibinfo{date}{Jan.}
  \bibinfo{year}{1991}), \bibinfo{pages}{114--131}.
\newblock
\urldef\tempurl%
\url{https://doi.org/10.1109/18.61109}
\showDOI{\tempurl}


\bibitem[De~Azua and Boyer(2014)]%
        {framework}
\bibfield{author}{\bibinfo{person}{Joan Adrià~Ruiz De~Azua} {and}
  \bibinfo{person}{Marc Boyer}.} \bibinfo{year}{2014}\natexlab{}.
\newblock \showarticletitle{Complete modelling of {AVB} in Network Calculus
  Framework}. In \bibinfo{booktitle}{\emph{Proceedings of the 22nd
  {International} {Conference} on {Real}-{Time} {Networks} and {Systems}}}.
  \bibinfo{publisher}{ACM Press}, \bibinfo{address}{Versaille, France},
  \bibinfo{pages}{55--64}.
\newblock
\showISBNx{978-1-4503-2727-5}
\urldef\tempurl%
\url{https://doi.org/10.1145/2659787.2659810}
\showDOI{\tempurl}


\bibitem[Fuller(2009)]%
        {plenary}
\bibfield{author}{\bibinfo{person}{John~Nels Fuller}.}
  \bibinfo{year}{2009}\natexlab{}.
\newblock \bibinfo{title}{Calculating the Delay Added by Qav Stream Queue}.
\newblock \bibinfo{howpublished}{Whitepaper}.
\newblock
\urldef\tempurl%
\url{https://www.ieee802.org/1/files/public/docs2009/av-fuller-queue-delay-calculation-0809-v02.pdf}
\showURL{%
\tempurl}


\bibitem[Grigorjew et~al\mbox{.}(2020)]%
        {grigorjew_decentralSP_2020}
\bibfield{author}{\bibinfo{person}{Alexej Grigorjew}, \bibinfo{person}{Florian
  Metzger}, \bibinfo{person}{Tobias Hoßfeld}, \bibinfo{person}{Johannes
  Specht}, \bibinfo{person}{Franz-Josef Götz}, \bibinfo{person}{Feng Chen},
  {and} \bibinfo{person}{Jürgen Schmitt}.} \bibinfo{year}{2020}\natexlab{}.
\newblock \showarticletitle{Bounded Latency with Bridge-Local Stream
  Reservation and Strict Priority Queuing}. In \bibinfo{booktitle}{\emph{2020
  11th International Conference on Network of the Future (NoF)}}.
  \bibinfo{pages}{55--63}.
\newblock
\urldef\tempurl%
\url{https://doi.org/10.1109/NoF50125.2020.9249224}
\showDOI{\tempurl}


\bibitem[Grigorjew et~al\mbox{.}(2021)]%
        {grigorjew_ML_2020}
\bibfield{author}{\bibinfo{person}{Alexej Grigorjew}, \bibinfo{person}{Michael
  Seufert}, \bibinfo{person}{Nikolas Wehner}, \bibinfo{person}{Jan Hofmann},
  {and} \bibinfo{person}{Tobias Hoßfeld}.} \bibinfo{year}{2021}\natexlab{}.
\newblock \showarticletitle{ML-Assisted Latency Assignments in Time-Sensitive
  Networking}. In \bibinfo{booktitle}{\emph{2021 IFIP/IEEE International
  Symposium on Integrated Network Management (IM)}}. \bibinfo{pages}{116--124}.
\newblock


\bibitem[He et~al\mbox{.}(2017)]%
        {flowshaping}
\bibfield{author}{\bibinfo{person}{Feng He}, \bibinfo{person}{Lin Zhao}, {and}
  \bibinfo{person}{Ershuai Li}.} \bibinfo{year}{2017}\natexlab{}.
\newblock \showarticletitle{Impact Analysis of Flow Shaping in
  {E}thernet-{AVB}/{TSN} and {AFDX} from Network Calculus and Simulation
  Perspective}.
\newblock \bibinfo{journal}{\emph{Sensors}} \bibinfo{volume}{17},
  \bibinfo{number}{5} (\bibinfo{date}{May} \bibinfo{year}{2017}).
\newblock
\showISSN{1424-8220}
\urldef\tempurl%
\url{https://doi.org/10.3390/s17051181}
\showDOI{\tempurl}


\bibitem[Kellerer and Van~Bemten(2016)]%
        {kellerer_network_2016}
\bibfield{author}{\bibinfo{person}{Wolfgang Kellerer} {and}
  \bibinfo{person}{Amaury Van~Bemten}.} \bibinfo{year}{2016}\natexlab{}.
\newblock \bibinfo{booktitle}{\emph{Network Calculus: {A} Comprehensive
  Guide}}.
\newblock \bibinfo{type}{{T}echnical {R}eport}. \bibinfo{institution}{Chair of
  Communication Networks}, \bibinfo{address}{Technische Universität München}.
  \bibinfo{pages}{57} pages.
\newblock
\urldef\tempurl%
\url{https://doi.org/10.13140/RG.2.2.32305.89448}
\showDOI{\tempurl}


\bibitem[Le~Boudec and Thiran(2001)]%
        {boudec_network_2012}
\bibfield{author}{\bibinfo{person}{Jean-Yves Le~Boudec} {and}
  \bibinfo{person}{Patrick Thiran}.} \bibinfo{year}{2001}\natexlab{}.
\newblock \bibinfo{booktitle}{\emph{Network calculus: A theory of deterministic
  queuing systems for the Internet}}.
\newblock \bibinfo{publisher}{Springer-Verlag}, \bibinfo{address}{Berlin,
  Heidelberg}.
\newblock
\showISBNx{3-540-42184-X}
\urldef\tempurl%
\url{https://doi.org/10.1007/3-540-45318-0}
\showDOI{\tempurl}


\bibitem[Maile et~al\mbox{.}(2020)]%
        {maile_network_2020}
\bibfield{author}{\bibinfo{person}{Lisa Maile}, \bibinfo{person}{Kai-Steffen
  Hielscher}, {and} \bibinfo{person}{Reinhard German}.}
  \bibinfo{year}{2020}\natexlab{}.
\newblock \showarticletitle{Network {Calculus} {Results} for {TSN}: {An}
  {Introduction}}. In \bibinfo{booktitle}{\emph{2020 {Information}
  {Communication} {Technologies} {Conference} ({ICTC})}}.
  \bibinfo{publisher}{IEEE}, \bibinfo{address}{Nanjing, China},
  \bibinfo{pages}{131--140}.
\newblock
\showISBNx{978-1-72816-776-3}
\urldef\tempurl%
\url{https://doi.org/10.1109/ICTC49638.2020.9123308}
\showDOI{\tempurl}


\bibitem[Maile et~al\mbox{.}(2022)]%
        {maile_journal_2022}
\bibfield{author}{\bibinfo{person}{Lisa Maile}, \bibinfo{person}{Kai-Steffen~J.
  Hielscher}, {and} \bibinfo{person}{Reinhard German}.}
  \bibinfo{year}{2022}\natexlab{}.
\newblock \showarticletitle{Delay-Guaranteeing Admission Control for
  Time-Sensitive Networking Using the Credit-Based Shaper}.
\newblock \bibinfo{journal}{\emph{IEEE Open Journal of the Communications
  Society}}  \bibinfo{volume}{3} (\bibinfo{year}{2022}),
  \bibinfo{pages}{1834--1852}.
\newblock
\urldef\tempurl%
\url{https://doi.org/10.1109/OJCOMS.2022.3212939}
\showDOI{\tempurl}


\bibitem[Niemann(2022)]%
        {profinet}
\bibfield{author}{\bibinfo{person}{Karl-Heinz Niemann}.}
  \bibinfo{year}{2022}\natexlab{}.
\newblock \bibinfo{booktitle}{\emph{{PROFINET Design Guideline Version 1.53}}}.
\newblock \bibinfo{type}{{T}echnical {R}eport} 8.062.
  \bibinfo{institution}{PROFIBUS Nutzerorganisation e.V.},
  \bibinfo{address}{Karlsruhe, Germany}.
\newblock


\bibitem[Queck(2012)]%
        {queck}
\bibfield{author}{\bibinfo{person}{Rene Queck}.}
  \bibinfo{year}{2012}\natexlab{}.
\newblock \showarticletitle{Analysis of Ethernet {AVB} for automotive networks
  using Network Calculus}. In \bibinfo{booktitle}{\emph{{IEEE} {International}
  {Conference} on {Vehicular} {Electronics} and {Safety} ({ICVES} 2012)}}.
  \bibinfo{publisher}{IEEE}, \bibinfo{pages}{61--67}.
\newblock
\showISBNx{978-1-4673-0993-6}
\urldef\tempurl%
\url{https://doi.org/10.1109/ICVES.2012.6294261}
\showDOI{\tempurl}


\bibitem[Regev and Tenea(2016)]%
        {regev_is_2016}
\bibfield{author}{\bibinfo{person}{Alon Regev} {and} \bibinfo{person}{Bogdan
  Tenea}.} \bibinfo{year}{2016}\natexlab{}.
\newblock \bibinfo{booktitle}{\emph{Is {AVB} Really Much Better Than Classic
  {IEEE} 802.{1Q} Queuing?}}
\newblock \bibinfo{type}{{T}echnical {R}eport}. \bibinfo{institution}{IXIA},
  \bibinfo{address}{Paris, France}. \bibinfo{pages}{23} pages.
\newblock
\urldef\tempurl%
\url{https://www.keysight.com/us/en/assets/7019-0405/technical-overviews/Is-AVB-Better-than-Classi.pdf}
\showURL{%
\tempurl}


\bibitem[Specht and Samii(2016)]%
        {specht_urgency-based_2016}
\bibfield{author}{\bibinfo{person}{Johannes Specht} {and}
  \bibinfo{person}{Soheil Samii}.} \bibinfo{year}{2016}\natexlab{}.
\newblock \showarticletitle{Urgency-Based Scheduler for Time-Sensitive Switched
  Ethernet Networks}. In \bibinfo{booktitle}{\emph{28th {Euromicro}
  {Conference} on {Real}-{Time} {Systems} ({ECRTS})}}.
  \bibinfo{publisher}{IEEE}, \bibinfo{pages}{75--85}.
\newblock
\showISBNx{978-1-5090-2811-5}
\urldef\tempurl%
\url{https://doi.org/10.1109/ECRTS.2016.27}
\showDOI{\tempurl}


\bibitem[Wroclawski(1997)]%
        {rfc2210}
\bibfield{author}{\bibinfo{person}{John~T. Wroclawski}.}
  \bibinfo{year}{1997}\natexlab{}.
\newblock \bibinfo{title}{{The Use of RSVP with IETF Integrated Services}}.
\newblock \bibinfo{howpublished}{RFC 2210}.
\newblock
\urldef\tempurl%
\url{https://doi.org/10.17487/RFC2210}
\showDOI{\tempurl}


\bibitem[Zhao et~al\mbox{.}(2018)]%
        {zhao_improving_avb_2018}
\bibfield{author}{\bibinfo{person}{Lin Zhao}, \bibinfo{person}{Feng He},
  \bibinfo{person}{Ershuai Li}, {and} \bibinfo{person}{Huagang Xiong}.}
  \bibinfo{year}{2018}\natexlab{}.
\newblock \showarticletitle{Improving Worst-Case Delay Analysis for Traffic of
  Additional Stream Reservation Class in {Ethernet-AVB} Network}.
\newblock \bibinfo{journal}{\emph{Sensors}}  \bibinfo{volume}{18}
  (\bibinfo{date}{Nov.} \bibinfo{year}{2018}).
\newblock
\urldef\tempurl%
\url{https://doi.org/10.3390/s18113849}
\showDOI{\tempurl}


\bibitem[Zhao et~al\mbox{.}(2022)]%
        {zhao_quantitative_2022}
\bibfield{author}{\bibinfo{person}{Luxi Zhao}, \bibinfo{person}{Paul Pop},
  {and} \bibinfo{person}{Sebastian Steinhorst}.}
  \bibinfo{year}{2022}\natexlab{}.
\newblock \showarticletitle{Quantitative Performance Comparison of Various
  Traffic Shapers in Time-Sensitive Networking}.
\newblock \bibinfo{journal}{\emph{IEEE Transactions on Network and Service
  Management}} (\bibinfo{year}{2022}), \bibinfo{pages}{1--1}.
\newblock
\urldef\tempurl%
\url{https://doi.org/10.1109/TNSM.2022.3180160}
\showDOI{\tempurl}


\bibitem[Zhao et~al\mbox{.}(2021)]%
        {zhao_latency_2020}
\bibfield{author}{\bibinfo{person}{Luxi Zhao}, \bibinfo{person}{Paul Pop},
  \bibinfo{person}{Zhong Zheng}, \bibinfo{person}{Hugo Daigmorte}, {and}
  \bibinfo{person}{Marc Boyer}.} \bibinfo{year}{2021}\natexlab{}.
\newblock \showarticletitle{Latency Analysis of Multiple Classes of {AVB}
  Traffic in {TSN} With Standard Credit Behavior Using Network Calculus}.
\newblock \bibinfo{journal}{\emph{IEEE Transactions on Industrial Electronics}}
  \bibinfo{volume}{68}, \bibinfo{number}{10} (\bibinfo{year}{2021}),
  \bibinfo{pages}{10291--10302}.
\newblock
\urldef\tempurl%
\url{https://doi.org/10.1109/TIE.2020.3021638}
\showDOI{\tempurl}


\end{thebibliography}

%%
%% If your work has an appendix, this is the place to put it.
\appendix

\begin{acronym}
\acro{WCRT}[WCRT]{worst-case response time}
\acro{SFD}[SFD]{start frame delimiter}
\acro{IFG}[IFG]{inter-frame gap}
\acro{IPG}[IPG]{inter-packet gap}
\acro{SP}[SP]{Strict Priority}
\acro{QoS}[QoS]{Quality of Service}
\acro{TSN}[TSN]{Time-Sensitive Networking}
\acro{NC}[NC]{Network Calculus}
\acro{CMI}[CMI]{Class Measurement Interval}
\acro{MFS}[MFS]{Maximum Frame Size}
\acro{MIF}[MIF]{Maximum Interval Frame}
\acro{AV}[AV]{Audio and Video}
\acro{AVB}[AVB]{Audio and Video Briding}
\acro{SRP}[SRP]{Stream Reservation Protocol}
\acro{SR}[SR]{Stream Reservation}
\acro{CBS}[CBS]{Credit-Based Shaper}
\acro{BE}[BE]{Best-Effort}
\acro{ES}[ES]{End-Station}
\acro{RAP}[RAP]{Resource Allocation Protocol}
\acro{FIFO}[FIFO]{First-In-First-Out}
\acro{FoI}[FoI]{Flow of Interest}
\acro{NRT}[NRT]{Non-Real-Time}
\acro{PLC}[PLC]{Programmable Logic Controller}
\acro{SSRP}[RRP]{Reliable Reservation Protocol}
\acro{RAP}[RAP]{Resource Allocation Protocol}
\acro{NMS}[NMS]{Network Management System}
\acro{TSpec}[TSpec]{Traffic Specification}
\end{acronym}

\end{document}